\pdfoutput=1 
\documentclass[a4paper,USenglish,cleveref,autoref,thm-restate,nolinenumbers]{lipics-v2021}

\hideLIPIcs  

\usepackage{amsmath, amssymb, amsthm}
\usepackage{mathtools}
\usepackage{booktabs}
\usepackage{algorithm}
\usepackage{algpseudocode}

\newcommand{\precedes}{\textsc{Precedes}}
\newcommand{\concurrent}{\textsc{Concurrent}}
\newcommand{\addinv}{\textsc{Add\_Inv}}
\newcommand{\addres}{\textsc{Add\_Res}}

\newcommand{\lp}{\mathit{lp}}

\newcommand{\Reg}{\mathit{Reg}}
\newcommand{\inv}{\mathit{inv}}
\newcommand{\res}{\mathit{res}}

\newcommand{\hbL}[1][]{\prec^{L}_{#1}}

\newcommand{\Eall}  {E_{\mathit{all}}}
\newcommand{\Eobs}  {E_{\mathit{obs}}}

\newcommand{\ts}    {\mathit{ts}}

\newcommand{\einv}[1]{\widehat{\mathit{inv}}(#1)}
\newcommand{\eres}[1]{\widehat{\mathit{res}}(#1)}

\title{On the Limits of Causal Observation in Shared-Memory Systems} 

\author{Gilde Valeria Rodríguez}{Posgrado en Ciencia e Ingeniería de la Computación, Universidad Nacional Autónoma de México, Mexico}{gilde@ciencias.unam.mx}{https://orcid.org/0009-0009-1463-7786}{}

\author{Armando Castañeda}{Instituto de Matemáticas, Universidad Nacional Autónoma de México, Mexico}{armando.castaneda@im.unam.mx}{https://orcid.org/0000-0002-8017-8639}{}

\author{Miguel Piña}{Independent research, México}{miguelpinia1@gmail.com}{https://orcid.org/0000-0002-2135-2131}{}

\authorrunning{G. V. Rodríguez, A. Castañeda, M. Piña}
\Copyright{Gilde Valeria Rodríguez, Armando Castañeda, Miguel Piña}

\ccsdesc{Theory of computation}
\ccsdesc{Theory of computation~Distributed computing models}
\ccsdesc{Theory of computation~Distributed algorithms}
\ccsdesc{Theory of computation~Concurrency}
\ccsdesc{Theory of computation~Concurrent algorithms}
\keywords{Causality, Linearizability, Quiescent Consistency, Causal Observability} 

\category{} 

\relatedversion{} 

\nolinenumbers
\begin{document}

\maketitle

\begin{abstract}
  Determining whether one concurrent operation completed before
  another began is a fundamental prerequisite for reasoning about
  the correctness of concurrent systems. We formalize this challenge
  as the \emph{Causal Observability Problem} ($\mathbf{COP}$):
  assign timestamps to the observable boundary events of a
  concurrent execution---invocations and responses---that faithfully
  reflect real-time operation order. A solution is \emph{complete}
  if it never misses a genuine precedence, and \emph{sound} if it
  never reports a spurious one.
  We prove that a strongly consistent solution, one that is
  simultaneously complete and sound, is unachievable at the
  observable boundary. We then show that the \emph{placement} of
  instrumentation events relative to operation boundaries
  deterministically governs what a monitor can guarantee: internal
  placement yields completeness, external placement yields soundness,
  and neither achieves both. This dichotomy holds independently of
  the underlying timestamp mechanism.
  We instantiate this framework with three non-blocking
  implementations of a Causal Monitor object: \textsc{FAInc}
  (centralized atomic counter), \textsc{Striped} (decentralized
  counter), and \textsc{Collect} (iterative register snapshot).
  \textsc{FAInc} and \textsc{Striped} are linearizable;
  \textsc{Collect} is only quiescently consistent. Despite this
  internal consistency gap, we prove that all three provide
  identical $\mathbf{COP}$ guarantees: placement alone determines
  observable behavior. We validate these claims empirically on a
  64-core NUMA architecture, showing that \textsc{Striped} matches
  \textsc{Collect} in throughput while preserving linearizability,
  resolving the cache-contention bottleneck of \textsc{FAInc} at
  high thread counts.
\end{abstract}

\tableofcontents

\section{Introduction}
\label{sec:intro}

Determining whether a concurrent system is correct requires
observing its execution.
But in asynchronous shared-memory systems, observation is
fundamentally limited: an external monitor can only see the
\emph{boundaries} of operations --- their invocations and responses
--- not the internal memory accesses that determine their real-time
order.
This gap between what a monitor can observe and what correctness
requires is the central challenge of \emph{runtime verification of
linearizability}~\cite{CR23, RPC26}.

Linearizability requires that every operation appear to take effect
at some point between its invocation and response, preserving the
real-time order $<_H$ of non-overlapping operations~\cite{HW90}.
Tracking $<_H$ is notoriously difficult: real-time precedence can
exist between operations on different processes with no underlying
information flow between them.
Because asynchronous processes have no access to a physical clock,
causality-tracking techniques based on Lamport
timestamps~\cite{Lamport78} or vector clocks~\cite{Fidge88,Mattern88}
cannot capture it --- they track \emph{information flow}, not
\emph{elapsed time}.

It has been shown that any monitor restricted to the observable
boundary faces a fundamental
impossibility~\cite{CR23, CR26_JACM}: no protocol can simultaneously
be \emph{complete} (capturing every genuine real-time precedence)
and \emph{sound} (reporting only genuine ones).
Any monitor that tries must either miss precedences or report
spurious ones.

We formalize this limitation as the \emph{Causal Observability
Problem} $(\mathbf{COP})$: assign timestamps to the observable
boundary events --- invocations and responses --- that faithfully
reflect $<_H$.
We prove that strong consistency, meaning both completeness and
soundness simultaneously, is unachievable at this boundary.
We then ask what \emph{is} achievable, and show that the answer
depends entirely on a single structural decision: where the monitor
places its instrumentation relative to the true operation boundaries.

We study the instrumentation strategies used by existing monitors
in the literature~\cite{CR23, CR26_JACM, RC24, EF18} and identify
a placement dichotomy that explains their guarantees.
Monitors that instrument operations by writing events into shared
memory \emph{before} invoking and \emph{after} completing each
operation --- enclosing the true interval from outside --- yield a
\emph{sound} monitor: every reported precedence is genuine.
Monitors that write events \emph{after} the operation starts and
\emph{before} it finishes --- strictly inside the true interval ---
yield a \emph{complete} monitor: every genuine precedence is
captured.
Monitors that instrument only one boundary, rather than both, are
neither sound nor complete~\cite{EF18}.
This dichotomy is tight: no placement achieves both simultaneously,
as we prove via an indistinguishability argument that holds
regardless of the shared-memory objects used for instrumentation.

To implement sound monitors without blocking bottlenecks under high
contention, we analyze three non-blocking implementations.
The first uses a shared counter.
\textsc{FAInc}, a centralized atomic increment introduced
in~\cite{RC24}, is linearizable.
\textsc{Striped}, a decentralized counter built on Java's
\texttt{LongAdder} restricted to increment-only operations,
is introduced in this work and is linearizable.
The restriction to increments is essential: it preserves
the monotonicity argument on which linearizability depends.
The second implementation, \textsc{Collect}, reads a snapshot of
process registers iteratively~\cite{RC24} and is only quiescently
consistent.
The consistency properties of \textsc{FAInc} and \textsc{Collect}
with respect to the Causal Monitor specification were not
established in~\cite{RC24}, nor were the two distinguished from
each other.
A key result of this paper is that while \textsc{Collect} is not
linearizable and both counters are, all three provide identical
$\mathbf{COP}$ guarantees: placement alone determines what the
problem guarantees, and the internal consistency of the timestamp
mechanism does not.
We validate this empirically on a 64-core NUMA architecture.
Prior work~\cite{RPC26} showed that \textsc{Collect} outperforms
\textsc{FAInc} in scalability; we show here that \textsc{Striped}
is a linearizable counter that matches \textsc{Collect} in
throughput while eliminating the cache-contention collapse of
\textsc{FAInc} at high thread counts.

\subsection*{Contributions}

\begin{enumerate}

  \item \textbf{Formalization of the observability limit.}
  We define $\mathbf{COP}$ at the observable boundary of a
  concurrent object and prove that strong consistency is
  unachievable there, making precise the impossibility identified
  informally in~\cite{CR23, RPC26}.

  \item \textbf{Placement dichotomy.}
  We analyze the instrumentation strategies of existing monitors
  and prove that placement --- internal or external --- determines
  the soundness and completeness profile of any monitor,
  independently of the underlying timestamp algorithm.

  \item \textbf{Consistency analysis of the Causal Monitor.}
  We establish the consistency properties of the Counter Monitor
  (linearizable) and the Collect Monitor (quiescently consistent
  but not linearizable) with respect to the Causal Monitor
  sequential specification, a result absent from~\cite{RC24}.

  \item \textbf{COP invariance.}
  We prove that all implementations are equivalent with respect
  to $\mathbf{COP}$: the internal consistency gap between
  linearizability and quiescent consistency is invisible at the
  observable boundary.

  \item \textbf{Scalable linearizable implementation.}
  We introduce \textsc{Striped}, a new linearizable implementation
  of the Causal Monitor, and show empirically that it matches the
  scalability of \textsc{Collect} and eliminates the contention
  bottleneck of \textsc{FAInc}.
  Prior work~\cite{RPC26} established that \textsc{Collect}
  outperforms \textsc{FAInc}; \textsc{Striped} shows that a
  linearizable monitor can achieve the same scalability advantage.

\end{enumerate}


Section~\ref{sec:model} presents the system model.
Section~\ref{sec:cop} defines $\mathbf{COP}$ and proves the
impossibility of strong consistency.
Section~\ref{sec:eint} analyzes the placement strategies of
existing monitors and proves the completeness and soundness
characterization.
Section~\ref{sec:monitor} specifies the Causal Monitor object.
Section~\ref{sec:implementations} presents the Counter and Collect
implementations.
Section~\ref{sec:correctness} contains the correctness proofs and
the COP invariance result.
Section~\ref{sec:evaluation} presents the empirical evaluation.
Section~\ref{sec:cmp} discusses related work.

\section{Model}
\label{sec:model}

We consider $N$ processes $p_1, \dots, p_N$ that run asynchronously and
communicate via an atomic shared-memory model~\cite{HW90}.
Shared-memory systems are organized around \emph{objects}.
An object is a shared data structure that supports a set of
operations; processes interact with the system exclusively
by invoking operations on objects and receiving their
responses.
Each object has an \emph{abstract type} $T$, which names
its operations and, together with a specification, defines
their legal behaviors~\cite{HW90}.

An \emph{operation} is a pair consisting of an
\emph{invocation} $\mathit{inv}(\mathit{op})$, issued when
a process calls the operation, and a matching
\emph{response} $\mathit{res}(\mathit{op})$, returned when
the call completes.
An operation is \emph{pending} if its invocation has
occurred but its response has not yet been received.
We consider that each event takes effect instantaneously at 
a single point in time; no two events are simultaneous.
We write $e \prec f$ if event $e$ occurs before event $f$ in real time.
This assumption is a property of the \emph{model} --- namely, that the 
 underlying memory is atomic~\cite{HW90}.

A \emph{history} $H$ is a finite sequence of invocation and
response events.
We say $H$ is \emph{well-formed} if every response is
preceded by a matching invocation; \emph{sequential} if
every invocation is immediately followed by its matching
response with no interleaving; and \emph{complete} if no
invocation is pending.

The \emph{real-time order} $<_H$ on the operations of $H$
is defined by $\mathit{op}_1 <_H \mathit{op}_2$
$  \;\;\iff\;\;$
 $ \mathit{res}(\mathit{op}_1) \prec \mathit{inv}(\mathit{op}_2)$.
Two operations are \emph{concurrent} when neither precedes
the other:
$\mathit{op}_1 \parallel \mathit{op}_2 \iff
\lnot(\mathit{op}_1 <_H \mathit{op}_2) \land
\lnot(\mathit{op}_2 <_H \mathit{op}_1)$.

The behavior of an object of type $T$ is described by a
\emph{sequential specification}.

\begin{definition}[Sequential specification~{\cite{HW90}}]
\label{def:seqspec}
The \emph{sequential specification} of type $T$ is a prefix-closed set 
$\mathit{Seq}(T)$ of sequential histories over the operations of $T$.
A sequential history $S$ is \emph{legal} if $S \in \mathit{Seq}(T)$.
The specification $\mathit{Seq}(T)$ thus determines which sequences of 
operations, each completing before the next begins, constitute valid 
behaviors of $T$.
\end{definition}

Intuitively, $\mathit{Seq}(T)$ describes what the object
is supposed to do when its operations are called one at a
time.
A correctness condition then relates the concurrent
histories produced by an implementation to this sequential
ideal, Linearizability being the most well-known such condition.


\begin{definition}[Linearizability~{\cite{HW90}}]
\label{def:lin}
An implementation of type $T$ is \emph{linearizable} with
respect to $\mathit{Seq}(T)$ if for every well-formed history
$H$ it produces there exists a legal sequential history
$S \in \mathit{Seq}(T)$ such that:
\begin{itemize}
  \item \textbf{(L1)} $S$ is equivalent to some completion
        of $H$ — pending operations may be dropped or
        completed, and
  \item \textbf{(L2)} $\mathit{op}_1 <_H \mathit{op}_2$
        in $H$ implies $\mathit{op}_1 <_S \mathit{op}_2$.
\end{itemize}
Condition (L2) is the \emph{real-time order} condition:
a linearizable implementation must respect the physical
execution order of non-overlapping operations.
\end{definition}

Linearizability is equivalent to assigning each operation
$\mathit{op}$ a \emph{linearization point}
$\mathit{LP}(\mathit{op}) \in
[\mathit{inv}(\mathit{op}), \mathit{res}(\mathit{op})]$
such that the sequential history obtained by ordering
operations by their linearization points is legal.

Sequential consistency~\cite{Lamport97} occupies an intermediate position 
in the hierarchy: it is weaker than linearizability, since it drops the 
real-time order requirement across processes, yet stronger than quiescent 
consistency, since it preserves the per-process program order at all times, 
not only across quiescent instants.

\begin{definition}[Sequential consistency~{\cite{Lamport97}}]
\label{def:sc}
An implementation of type $T$ is \emph{sequentially consistent} with 
respect to $\mathit{Seq}(T)$ if for every well-formed history $H$ it 
produces there exists a legal sequential history $S \in \mathit{Seq}(T)$ 
such that:
\begin{itemize}
  \item \textbf{(SC1)} $S$ is equivalent to some completion of $H$:
        every completed operation appears in $S$, and pending operations 
        may be dropped or completed; and
  \item \textbf{(SC2)} $S$ respects \emph{program order}: for each 
        process $p_i$, the operations of $p_i$ appear in $S$ in the 
        same order as in $H$.
\end{itemize}
\end{definition}
In high-contention concurrent systems, maintaining the real-time
order of non-overlapping operations imposes a significant
synchronization cost.
\emph{Quiescent Consistency} (QC)~\cite{HS08} addresses this by
relaxing the real-time order requirement: rather than ordering all
non-overlapping operations, it requires only that operations
separated by a period of complete system inactivity be ordered
relative to one another.

We first recall the relevant notions of quiescence~\cite{HS08}.
A \emph{quiescent instant} is a real-time instant at which no
operation is pending.
A \emph{quiescent interval} is any open interval
$(Q_k, Q_{k+1})$ between two consecutive quiescent instants
$Q_k$ and $Q_{k+1}$.
The operations that execute entirely within the $k$-th quiescent
interval form the \emph{quiescent group} $G_k$.
Two operations are \emph{separated by quiescence} if they belong
to different groups: $\mathit{op}_1 \in G_k$ and
$\mathit{op}_2 \in G_{k'}$ with $k \ne k'$.

\begin{definition}[Quiescent consistency~{\cite{HS08}}]
\label{def:qc}
An implementation of type $T$ is \emph{quiescently consistent}
with respect to $\mathit{Seq}(T)$ if for every well-formed
history $H$ it produces there exists a legal sequential history
$S \in \mathit{Seq}(T)$ equivalent to some completion of $H$
such that: if $\res(\mathit{op}_1) \prec Q \prec
\inv(\mathit{op}_2)$ for some quiescent instant $Q$, then
$\mathit{op}_1 <_S \mathit{op}_2$.
Operations within the same quiescent group may be reordered
freely, provided the result is legal.
\end{definition}

\noindent
The three conditions are strictly ordered by strength.
Neither implication reverses~\cite{HW90, HS08}, yielding the
strict containment:$\text{Linearizable}
  \;\subsetneq\;
  \text{Sequentially Consistent}
  \;\subsetneq\;
  \text{Quiescently Consistent}$.


\section{The Causal Observability Problem}
\label{sec:cop}

In a concurrent execution, an external client observes only the
boundaries of operations --- their invocations and responses ---
and uses this information to reason about the order in which
operations executed.
The client's goal is to determine the real-time order $<_H$:
whether one operation completed before another began.

A \emph{timestamping protocol} runs inside the system, on the
processes themselves, and assigns timestamps to boundary events
so that the client can answer order queries by comparing
timestamps.
The protocol has access to the flow of information through shared
memory --- it can observe Lamport causality $\hbL$ --- but it
cannot directly observe the true operation boundaries, which are
transfers of control rather than shared-memory steps.
The central question is whether a protocol can assign timestamps
that faithfully reflect $<_H$ using only what it can observe.

\subsection{Observable Events and Lamport Causality}
\label{sec:obs-events}

Recall from Section~\ref{sec:model} that an execution produces a
history $H$: a sequence of invocation and response events over the
processes $p_1,\dots,p_N$.
The \emph{observable boundary} is the set of events in $H$:
$\Eobs \;\triangleq\;
  \bigl\{\,\inv(\mathit{op}),\;\res(\mathit{op})
  \;\bigm|\; \mathit{op} \in H \,\bigr\}$.
These are the events the client observes: when each operation
starts and when it finishes.
They are also the events to which the protocol must assign
timestamps --- the domain of the problem.

The order the protocol \emph{must report} is the real-time order
$<_H$ of Section~\ref{sec:model}.
Since $\mathit{op}_A <_H \mathit{op}_B$ holds exactly when
$\res(\mathit{op}_A) \prec \inv(\mathit{op}_B)$, this is already
a relation on boundary events: it links the response of one
operation to the invocation of a later one.

The order the protocol \emph{can construct} from shared memory
is Lamport's happened-before~\cite{Lamport78}.

\begin{definition}[Lamport causality, $\hbL$]
\label{def:lamport}
$\hbL$ is the smallest transitive relation over the events of an
execution such that:
\begin{enumerate}[(L1)]
  \item \emph{Program order}: if $e_1$ and $e_2$ occur at the same
        process and $e_1 \prec e_2$, then $e_1 \hbL e_2$.
  \item \emph{Object communication}: if $e_1$ is an operation on a
        shared object $O$ whose effect is observable by $e_2$---
        that is, the state change produced by $e_1$ influences the
        response of $e_2$---then $e_1 \hbL e_2$ (e.g., $e_1$ writes to $O$ and $e_2$ reads from $O$).
  \item \emph{Transitivity}: if $e_1 \hbL e_2$ and $e_2 \hbL e_3$,
        then $e_1 \hbL e_3$.
\end{enumerate}
\end{definition}

\noindent
In the special case where shared objects are read/write registers,
condition~(L2) reduces to the standard read-from relation:
$e_1$ writes a value to register $r$ and $e_2$ reads that value
from $r$.

\noindent
Restricted to $\Eobs$, we write $\hbL[\Eobs]$ for the relation
$\hbL$ induced on boundary events.
Vector clocks~\cite{Fidge88,Mattern88} reconstruct $\hbL[\Eobs]$
exactly: $e_1 \hbL e_2 \iff \mathit{vc}(e_1) < \mathit{vc}(e_2)$.
However, as we show below, $\hbL[\Eobs]$ is not the relation the
protocol needs to report.

\subsection{The Problem}
\label{sec:cop-def}

Let $(\mathbb{T}, <_\mathbb{T})$ be a totally ordered set.
A \emph{timestamping protocol} assigns $\ts(e) \in \mathbb{T}$
to each $e \in \Eobs$.
The client then answers a precedence query by checking whether
$\ts(\res(\mathit{op}_A)) <_\mathbb{T} \ts(\inv(\mathit{op}_B))$.

\begin{definition}[Causal Observability Problem, $\mathbf{COP}$]
\label{def:cop}
The \emph{Causal Observability Problem} asks for a timestamping
protocol assigning $\ts(e) \in \mathbb{T}$ to each $e \in \Eobs$
such that, for all operations $\mathit{op}_A, \mathit{op}_B$:
\begin{enumerate}[(i)]
  \item \emph{Completeness}:
        $\mathit{op}_A <_H \mathit{op}_B
        \;\Rightarrow\;
        \ts(\res(\mathit{op}_A)) <_\mathbb{T} \ts(\inv(\mathit{op}_B))$.
  \item \emph{Soundness}:
        $\ts(\res(\mathit{op}_A)) <_\mathbb{T} \ts(\inv(\mathit{op}_B))
        \;\Rightarrow\;
        \mathit{op}_A <_H \mathit{op}_B$.
\end{enumerate}
A protocol satisfying only~(i) is \emph{complete};
one satisfying only~(ii) is \emph{sound};
one satisfying both is \emph{strongly consistent}.
\end{definition}

\noindent
Both conditions are stated directly in terms of $<_H$: the
protocol's timestamps must faithfully reflect real-time operation
order, not merely information flow.

Completeness forbids \emph{missing} a genuine precedence: if
$\mathit{op}_A$ truly completed before $\mathit{op}_B$ began,
the timestamps must reflect that.
Soundness forbids \emph{reporting} a spurious precedence: if the
timestamps say $\mathit{op}_A$ preceded $\mathit{op}_B$, then
$\mathit{op}_A$ must have truly completed before $\mathit{op}_B$
began.
A strongly consistent protocol does both: its timestamp order
is equivalent to $<_H$.

Note that soundness does \emph{not} require timestamps to imply
precedence for concurrent operations --- two concurrent operations
will inevitably be ordered by any total order on $\mathbb{T}$,
and that is acceptable.
What soundness prohibits is asserting a precedence that did not
occur in real time.

\subsection{The Gap Between $<_H$ and $\hbL[\Eobs]$}
\label{sec:gap}

The order the protocol must report, $<_H$, and the order it can
construct, $\hbL[\Eobs]$, both live on $\Eobs$.
They are, however, incomparable as relations.

\begin{claim}
\label{claim:incomparable}
$<_H \,\not\subseteq\, \hbL[\Eobs]$ and
$\hbL[\Eobs] \,\not\subseteq\, {<_H}$.
\end{claim}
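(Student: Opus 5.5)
Both non-inclusions are witnessed by small explicit executions. I will exhibit one for each. The only care needed is to read both relations as relations between a response $\res(\mathit{op}_A)$ and an invocation $\inv(\mathit{op}_B)$ on $\Eobs$, since that is where $<_H$ lives.

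\emph{First, $<_H \not\subseteq \hbL[\Eobs]$.} Take two processes $p_1, p_2$ and two operations on distinct objects, or equivalently operations that touch no shared object at all. Let $\mathit{op}_A$ be executed by $p_1$ and $\mathit{op}_B$ by $p_2$, scheduled so that $\res(\mathit{op}_A) \prec \inv(\mathit{op}_B)$. Then $\mathit{op}_A <_H \mathit{op}_B$. Clause (L1) of Definition~\ref{def:lamport} cannot relate the two boundary events, because they occur at different processes. Clause (L2) requires a shared-object effect of $p_1$ that influences a response observed by $p_2$, and no such effect exists. Since $\hbL$ is the \emph{smallest} transitive relation closed under (L1)--(L2), it contains only chains built from these edges. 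Every chain starting at $\res(\mathit{op}_A)$ stays within $p_1$'s events, so $\res(\mathit{op}_A) \not\hbL \inv(\mathit{op}_B)$. The minimality of $\hbL$ is what makes this step rigorous, and I will state it explicitly.

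\emph{Second, $\hbL[\Eobs] \not\subseteq {<_H}$.} I will pick a pair of boundary events that are causally related but whose operations overlap in real time. The simplest witness is a single operation $\mathit{op}$ by one process, with $\inv(\mathit{op}) \prec \res(\mathit{op})$. By (L1), $\inv(\mathit{op}) \hbL \res(\mathit{op})$, yet $\mathit{op} \not<_H \mathit{op}$, because $<_H$ is irreflexive. A reader might object that this pairs an invocation with a response, the wrong direction for $<_H$. So I will also give a cross-process witness in the response-to-invocation shape. Let $p_1$ run $\mathit{op}_A$, which internally writes $1$ to a register $r$. Let $p_2$ run $\mathit{op}_B$ and, before invoking it, read $r$, obtaining $1$. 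Schedule the events as
\[
\inv(\mathit{op}_A),\ w_{p_1}(r),\ \mathit{read}_{p_2}(r),\ \inv(\mathit{op}_B),\ \res(\mathit{op}_A).
\]
Then $w_{p_1}(r) \hbL \mathit{read}_{p_2}(r) \hbL \inv(\mathit{op}_B)$, and by (L1) $\inv(\mathit{op}_A) \hbL w_{p_1}(r)$. Hence $\inv(\mathit{op}_A) \hbL \inv(\mathit{op}_B)$, while $\mathit{op}_A \parallel \mathit{op}_B$. So $\hbL[\Eobs]$ relates events of two concurrent operations, which $<_H$ never does.

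The main obstacle is the interpretation of the second inclusion. Because $<_H$ is defined on operations, I must fix how it is read as a relation on $\Eobs$: I take it to relate exactly the pairs $(\res(\mathit{op}_A), \inv(\mathit{op}_B))$ with $\mathit{op}_A <_H \mathit{op}_B$. Under that reading, any $\hbL$-pair of another shape, or between concurrent operations, suffices. The cross-process example is the one I would use, since it shows the discrepancy is not merely a matter of form: information flow can link operations that overlap in time.
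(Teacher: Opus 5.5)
Your proof is correct and follows the paper's route. For $<_H \not\subseteq \hbL[\Eobs]$ you use two non-communicating processes with $\res(\mathit{op}_A) \prec \inv(\mathit{op}_B)$, and you invoke the minimality of $\hbL$ more explicitly than the paper does. For the converse you use information flowing out of a still-running $\mathit{op}_A$, as the paper does. The only difference is that the paper places the read inside $\mathit{op}_B$ and so obtains $\inv(\mathit{op}_A) \hbL \res(\mathit{op}_B)$.

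\textbf{One correction.} Your cross-process witness is an invocation-to-invocation pair, not the promised response-to-invocation pair. No response-to-invocation witness can exist. Under atomic memory every (L1) and (L2) edge goes forward in real time, so $\res(\mathit{op}_A) \hbL \inv(\mathit{op}_B)$ already forces $\res(\mathit{op}_A) \prec \inv(\mathit{op}_B)$, i.e.\ $\mathit{op}_A <_H \mathit{op}_B$.

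The second non-inclusion is therefore witnessed only by pairs of a shape that $<_H$ never relates. Your one-operation example $\inv(\mathit{op}) \hbL \res(\mathit{op})$ already suffices, and the paper's witness is of this kind too. You should drop the ``response-to-invocation'' label. You should also qualify your closing remark: what the cross-process example adds is that it links two distinct, overlapping operations, not that it escapes the shape mismatch.
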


\begin{proof}
\emph{$<_H \not\subseteq \hbL[\Eobs]$.}
Let $\mathit{op}_A$ complete at $p_i$ before $\mathit{op}_B$
begins at $p_j$, with no register written by $p_i$ during
$\mathit{op}_A$ subsequently read by $p_j$.
Then $\res(\mathit{op}_A) \prec \inv(\mathit{op}_B)$, so
$\mathit{op}_A <_H \mathit{op}_B$, yet no $\hbL$-chain links
$\res(\mathit{op}_A)$ to $\inv(\mathit{op}_B)$: time elapsed,
but no information was exchanged.

\emph{$\hbL[\Eobs] \not\subseteq {<_H}$.}
Let $\mathit{op}_A$ and $\mathit{op}_B$ overlap
($\mathit{op}_A \parallel \mathit{op}_B$), and let $p_i$ write a
register during $\mathit{op}_A$ that $p_j$ reads during
$\mathit{op}_B$.
By~(L1) and~(L2),
$\inv(\mathit{op}_A) \hbL[\Eobs] \res(\mathit{op}_B)$,
yet $\mathit{op}_A \parallel \mathit{op}_B$ means neither
$\mathit{op}_A <_H \mathit{op}_B$ nor
$\mathit{op}_B <_H \mathit{op}_A$.
\end{proof}

\noindent
The two failures have distinct causes.
$<_H$ sees real-time precedences that leave no informational
trace: one operation finishes before another begins, but the
processes never communicate.
$\hbL[\Eobs]$ sees informational chains that carry no real-time
precedence: concurrent operations can exchange information through
shared registers without one completing before the other starts.

This incomparability explains why vector clocks do not solve
$\mathbf{COP}$: they reconstruct $\hbL[\Eobs]$ exactly, but
that is not $<_H$.
A protocol that tracks $\hbL[\Eobs]$ will miss real-time
precedences that left no informational trace, violating
completeness.
A protocol that reports every $<_H$ precedence may assign
timestamps that contradict $\hbL[\Eobs]$, violating soundness.
The following section shows that no protocol can be strongly
consistent for $\mathbf{COP}$: completeness and soundness are
irreconcilable at $\Eobs$.

\section{The Interface Event Layer}
\label{sec:eint}

Section~\ref{sec:cop} showed that the order a protocol must report
($<_H$) and the order it can construct from shared memory
($\hbL[\Eobs]$) are incomparable.
A protocol that wants to approximate $<_H$ must construct
$\hbL[\Eobs]$ by invoking shared-memory objects around the
operation boundaries it wants to observe, using the causal
relationships among those invocations as a proxy for real-time
order.
We now show that the position of those invocations relative to
the true boundaries determines exactly what the protocol can
guarantee, and that no position achieves both soundness and
completeness simultaneously.

We study the instrumentation strategies used by existing
monitors~\cite{CR23, CR26_JACM, RC24, EF18} and identify two
classes.
Monitors that invoke objects only at one boundary are neither
sound nor complete~\cite{EF18}
(Lemma~\ref{lem:unilateral}, Appendix~\ref{app:placement}).
Monitors that invoke objects at both boundaries fall into one of
two classes: those whose invocations fall strictly inside the
operation interval (\emph{internal placement}) and those whose
invocations strictly enclose it (\emph{external placement}).
Internal placement yields completeness; external placement yields
soundness
(Lemmas~\ref{lem:placement},
\ref{lem:external-not-complete},
and~\ref{lem:internal-not-sound},
Appendix~\ref{app:placement}).
No combination achieves both
(Theorem~\ref{thm:impossibility},
Appendix~\ref{app:impossibility}).

\subsection{Interface Events and Monitors}

A protocol that monitors $\mathbf{COP}$ cannot observe the true
boundaries $\inv(\mathit{op})$ and $\res(\mathit{op})$ directly:
these mark the transfer of control between a process and its
caller, and are not steps in shared memory.
The only tool available is to invoke shared-memory objects around
those boundaries, producing observable actions whose causal
relationships reconstruct $\hbL[\Eobs]$.

\begin{definition}[Monitor]
\label{def:monitor}
A \emph{monitor} for $\mathbf{COP}$ is a protocol that, for each
operation $\mathit{op}$, invokes a finite set of shared-memory
objects before and after each true boundary, producing observable
actions called \emph{interface events}.
The monitor answers $\precedes(A, B) = \mathtt{true}$ iff the
interface events establish
$\eres(\mathit{op}_A) \hbL \einv(\mathit{op}_B)$.
\end{definition}

\noindent
We place no restriction on what objects the monitor may invoke:
interface events may be operations on objects of any type,
including objects with infinite consensus number~\cite{H91}.
The obstacle is not the synchronization power of the objects
but the position of their invocations relative to the true
boundaries.

We assume every shared-memory operation takes effect at a single
linearization point, formalized through an \emph{ideal log}.

\begin{definition}[Ideal log]
\label{def:ideal-log}
A monitor operates under an \emph{ideal log} if every interface
event $e$ has a linearization point $\lp(e)$ such that: $e_1 \hbL e_2 \;\iff\; \lp(e_1) \prec \lp(e_2)$.

\end{definition}

\noindent
An ideal log grants the monitor perfect knowledge of the
real-time order among its own interface events.
Any impossibility result under an ideal log is unconditional:
it cannot be attributed to imprecision in logging or to the
choice of objects invoked.

Because what matters for a precedence query is the last interface
event before $\res(\mathit{op})$ and the first after
$\inv(\mathit{op})$, intermediate events carry no additional
causal information.
Without loss of generality, we model each monitor as emitting
exactly one interface event per boundary: $\einv(\mathit{op})$
and $\eres(\mathit{op})$.

\subsection{The Placement Dichotomy}

Under an ideal log, each interface event and each true boundary
occur at distinct points in real time, since interface events
are shared-memory invocations and true boundaries are control
transfers.
This forces every monitor that instruments both boundaries into
one of two classes.

\begin{definition}[Placement]
\label{def:placement}
A monitor has \emph{internal placement} if
$\inv(\mathit{op}) \prec \lp(\einv(\mathit{op}))$ and
$\lp(\eres(\mathit{op})) \prec \res(\mathit{op})$, and
\emph{external placement} if
$\lp(\einv(\mathit{op})) \prec \inv(\mathit{op})$ and
$\res(\mathit{op}) \prec \lp(\eres(\mathit{op}))$.
\end{definition}

\noindent
Internal placement corresponds to online checkers that
instrument the object's internal steps~\cite{CR23}.
External placement corresponds to offline verifiers that wrap
each call site~\cite{CR26_JACM}.

\begin{lemma}[Placement characterization]
\label{lem:placement}
Assuming an ideal log, a monitor with internal placement is
complete for $\mathbf{COP}$, and a monitor with external
placement is sound.
Neither achieves the other property.
See Appendix~\ref{app:placement} for the full proof.
\end{lemma}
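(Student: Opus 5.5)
The plan is to prove the two positive claims by chaining real-time inequalities through the ideal log, and the two negative claims by explicit two-operation counterexamples.

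\emph{Internal placement is complete.} Suppose $\mathit{op}_A <_H \mathit{op}_B$, that is, $\res(\mathit{op}_A) \prec \inv(\mathit{op}_B)$. Internal placement gives $\lp(\eres(\mathit{op}_A)) \prec \res(\mathit{op}_A)$ and $\inv(\mathit{op}_B) \prec \lp(\einv(\mathit{op}_B))$. Since $\prec$ is a strict total order on event instants, these combine to
\[
\lp(\eres(\mathit{op}_A)) \prec \res(\mathit{op}_A) \prec \inv(\mathit{op}_B) \prec \lp(\einv(\mathit{op}_B)).
\]
The right-to-left direction of Definition~\ref{def:ideal-log} then yields $\eres(\mathit{op}_A) \hbL \einv(\mathit{op}_B)$. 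By Definition~\ref{def:monitor}, $\precedes(A,B)=\mathtt{true}$, so the timestamp comparison reports the precedence.

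\emph{External placement is sound.} Suppose the monitor reports $\eres(\mathit{op}_A) \hbL \einv(\mathit{op}_B)$. The left-to-right direction of the ideal log gives $\lp(\eres(\mathit{op}_A)) \prec \lp(\einv(\mathit{op}_B))$. External placement gives $\res(\mathit{op}_A) \prec \lp(\eres(\mathit{op}_A))$ and $\lp(\einv(\mathit{op}_B)) \prec \inv(\mathit{op}_B)$. Chaining these, $\res(\mathit{op}_A) \prec \inv(\mathit{op}_B)$, which is exactly $\mathit{op}_A <_H \mathit{op}_B$.

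\emph{Negative parts.} For each placement I will build an execution with two operations on distinct processes.
\begin{itemize}
\item Internal placement is not sound: take $\mathit{op}_A \parallel \mathit{op}_B$ and order the instants as
\[
\inv(\mathit{op}_B) \prec \lp(\einv(\mathit{op}_A)) \prec \lp(\eres(\mathit{op}_A)) \prec \lp(\einv(\mathit{op}_B)) \prec \res(\mathit{op}_A) \prec \res(\mathit{op}_B),
\]
with $\inv(\mathit{op}_A)$ placed before $\lp(\einv(\mathit{op}_A))$. This respects internal placement for both operations. By the ideal log, $\eres(\mathit{op}_A) \hbL \einv(\mathit{op}_B)$, so a precedence is reported. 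But $\inv(\mathit{op}_B) \prec \res(\mathit{op}_A)$, so the operations are concurrent and the report is spurious.
\item External placement is not complete: take $\res(\mathit{op}_A) \prec \inv(\mathit{op}_B)$, but schedule
\[
\lp(\einv(\mathit{op}_B)) \prec \lp(\eres(\mathit{op}_A)),
\]
which is consistent with $\lp(\einv(\mathit{op}_B)) \prec \inv(\mathit{op}_B)$ and $\res(\mathit{op}_A) \prec \lp(\eres(\mathit{op}_A))$ once both are placed in the gap between $\res(\mathit{op}_A)$ and $\inv(\mathit{op}_B)$. The ideal log then forbids $\eres(\mathit{op}_A) \hbL \einv(\mathit{op}_B)$, since that would require the reverse order of linearization points. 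So a genuine precedence is missed.
\end{itemize}

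The main obstacle is justifying that these schedules are admissible. Asynchrony lets an adversarial scheduler interleave the steps of different processes freely. The ideal log fixes $\hbL$ between interface events only through the order of their linearization points, so the monitor cannot prevent the constructed orders. I would state this explicitly as the adversary's freedom to choose the delay between a true boundary and its interface event. With that in hand, the constructions hold no matter which objects the monitor invokes.
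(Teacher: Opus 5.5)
Your proposal is correct and follows the paper's route. The two negative halves use the same adversarial-scheduling counterexamples as Lemmas~\ref{lem:external-not-complete} and~\ref{lem:internal-not-sound}, and the two positive halves use the same chaining of real-time inequalities through the ideal log that the paper uses in Theorems~\ref{thm:counter-sound} and~\ref{thm:collect-sound}; your explicit chains in fact supply the positive directions, which Appendix~\ref{app:placement} only asserts by citing Lemma~\ref{lem:placement} itself.
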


\subsection{Impossibility of Strong Consistency}

\begin{theorem}[Impossibility of strong consistency]
\label{thm:impossibility}
No monitor can be strongly consistent for $\mathbf{COP}$, even
with an ideal log.
\end{theorem}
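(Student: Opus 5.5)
We argue by indistinguishability, exploiting the fact that a monitor answers $\precedes(A,B)$ solely from the relative order of the linearization points of its interface events (under an ideal log, $\hbL$ among interface events coincides with the real-time order of their $\lp$'s), while the true boundaries $\inv(\cdot)$ and $\res(\cdot)$ are control transfers invisible in shared memory. The plan is to build two executions $\alpha$ and $\beta$ that produce the same interface events, in the same order of linearization points and with the same values returned, so that the monitor gives identical answers in both, yet $\mathit{op}_A <_H \mathit{op}_B$ holds in one and $\mathit{op}_A \parallel \mathit{op}_B$ holds in the other. Every monitor that instruments both boundaries is internal or external by Definition~\ref{def:placement}; a monitor instrumenting only one boundary is already neither sound nor complete by Lemma~\ref{lem:unilateral}. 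It therefore suffices to treat the two placement classes, though we aim for a single construction that is uniform in the placement.

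For the construction, take two processes $p_i$ and $p_j$ with one operation each, $\mathit{op}_A$ on $p_i$ and $\mathit{op}_B$ on $p_j$. Fix the interface-event schedule
\[
\lp(\einv(\mathit{op}_A)) \prec \lp(\eres(\mathit{op}_A)) \prec \lp(\einv(\mathit{op}_B)) \prec \lp(\eres(\mathit{op}_B)),
\]
so that in both executions the monitor establishes $\eres(\mathit{op}_A) \hbL \einv(\mathit{op}_B)$ and answers $\mathtt{true}$. In $\alpha$, place $\res(\mathit{op}_A)$ and $\inv(\mathit{op}_B)$ so that $\res(\mathit{op}_A) \prec \inv(\mathit{op}_B)$, giving $\mathit{op}_A <_H \mathit{op}_B$. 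In $\beta$, slide the control transfers, without moving any shared-memory step, so that $\inv(\mathit{op}_B) \prec \res(\mathit{op}_A)$, giving $\mathit{op}_A \parallel \mathit{op}_B$. This sliding is the place where placement matters.

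Under external placement, $\res(\mathit{op}_A)$ may lie anywhere in $(\text{last step of }\mathit{op}_A,\ \lp(\eres(\mathit{op}_A)))$ and $\inv(\mathit{op}_B)$ anywhere in $(\lp(\einv(\mathit{op}_B)),\ \text{first step of }\mathit{op}_B)$. Both orders are therefore realizable only if $\inv(\mathit{op}_B)$ can fall before $\res(\mathit{op}_A)$, which the fixed schedule above forbids, since $\lp(\eres(\mathit{op}_A)) \prec \lp(\einv(\mathit{op}_B))$. We must instead use the dual schedule $\lp(\einv(\mathit{op}_B)) \prec \lp(\eres(\mathit{op}_A))$. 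The monitor then says $\mathtt{false}$, and we get a genuine precedence $\res(\mathit{op}_A) \prec \inv(\mathit{op}_B)$ squeezed between those two points, with both operations' internal steps also placed in that window. This is a missed precedence, so the external monitor is not complete. Symmetrically, for internal placement the original schedule with $\inv(\mathit{op}_B) \prec \lp(\einv(\mathit{op}_B))$ and $\lp(\eres(\mathit{op}_A)) \prec \res(\mathit{op}_A)$ lets us interleave the two boundaries so that $\mathit{op}_A \parallel \mathit{op}_B$ while the monitor says $\mathtt{true}$, so the internal monitor is not sound. These are exactly Lemmas~\ref{lem:external-not-complete} and~\ref{lem:internal-not-sound}, and combining them with the case split completes the proof.

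The main obstacle is justifying that moving a true boundary relative to interface events produces a legitimate execution that the monitor cannot distinguish. This needs an argument that the processes' local states and all shared-memory responses are unchanged. I would argue it from the fact that $\inv$ and $\res$ are not shared-memory steps, so the schedule of shared-memory steps, and hence every value returned, is identical in $\alpha$ and $\beta$; the ideal log then guarantees that the same $\hbL$ relation, and therefore the same $\precedes$ answer, results. Since no assumption on the objects invoked is used, the result holds even with objects of infinite consensus number.
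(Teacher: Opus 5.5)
Your argument is correct, but it takes a different route from the paper's.

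\textbf{Your route.} After the uniform construction breaks down under external placement, you fall back on a case split over placement classes. This reduces the theorem to Lemmas~\ref{lem:unilateral}, \ref{lem:external-not-complete} and~\ref{lem:internal-not-sound}. Each case is then a single execution that violates one property directly. So the $\alpha$/$\beta$ indistinguishability of your first paragraph is never used. You only need the bad schedules to be realizable, and asynchrony plus program order give that.

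\textbf{The paper's route.} The paper runs one indistinguishability argument for an arbitrary monitor. The adversary schedules $p_i$ and $p_j$ so that $\res(\mathit{op}_A)$ and $\inv(\mathit{op}_B)$ fall in a common window. It then takes two executions $E_1,E_2$ that differ only in the order of these two control transfers. They produce the same log and hence the same answer, yet completeness demands $\mathtt{true}$ in $E_1$ and soundness demands $\mathtt{false}$ in $E_2$.

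\textbf{Why your uniform attempt failed.} You fixed the order of the interface events first. Instead, require that no shared-memory step separates the two boundaries, and let the interface events fall wherever program order puts them. Then a single construction works for every placement. Stated this way, the idea is more general than the paper's write-up. The paper's windows, with $a^+=\lp(\eres{\mathit{op}_A})$ and $b^-=\lp(\einv{\mathit{op}_B})$, presuppose $\res(\mathit{op}_A)\prec\lp(\eres{\mathit{op}_A})$ and $\lp(\einv{\mathit{op}_B})\prec\inv(\mathit{op}_B)$. So as written they do not cover internal placement.

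\textbf{What each route buys.} The uniform argument needs no classification and never needs to know which property fails. It therefore also covers monitors whose placement varies across operations. Your split is exhaustive only under two conditions. First, placement must be uniform per monitor. Second, ``instrumenting only one boundary'' must be read as Lemma~\ref{lem:unilateral} actually states it: both interface events on the same side of their boundaries. Your constructions use only the positions of $\eres{\mathit{op}_A}$ and $\einv{\mathit{op}_B}$, so a pairwise case split would remove the first restriction. In exchange, your route gives the finer picture: internal placement loses soundness, external loses completeness, and unilateral loses both. It also handles internal placement explicitly, which the paper's write-up does not.
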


\begin{proof}[Proof sketch]
Fix any monitor $\mathcal{M}$ and two operations $\mathit{op}_A$,
$\mathit{op}_B$ on distinct processes.
Since interface events are shared-memory invocations and true
boundaries are control transfers, they can never coincide,
forcing an open real-time window around each boundary inside
which the true boundary can move without altering any interface
event or its position in the log.
Because the processes are asynchronous, an adversary can schedule
their steps so that the two windows overlap.
Inside the overlap, two executions identical in every
shared-memory step --- producing identical $\hbL[\Eobs]$ under
any ideal log --- can place
$\res(\mathit{op}_A) \prec \inv(\mathit{op}_B)$
(requiring $\precedes(A,B) = \mathtt{true}$) or
$\inv(\mathit{op}_B) \prec \res(\mathit{op}_A)$
(requiring $\precedes(A,B) = \mathtt{false}$).
No single answer satisfies both.
See Appendix~\ref{app:impossibility} for the full proof.
\end{proof}

\noindent
This impossibility was first identified in~\cite{CR23, CR26_JACM}
as a structural limitation of runtime verification of
linearizability.
Theorem~\ref{thm:impossibility} shows it is an unconditional
consequence of the observable boundary of any concurrent object,
independent of the verification algorithm and of the
shared-memory objects the monitor invokes.
\section{The Causal Monitor Object}
\label{sec:monitor}

The Causal Monitor is a concurrent object that instruments a
target algorithm $A$.
Processes invoke $\addinv(\mathit{id})$ at the start of each
operation and $\addres(\mathit{id})$ at its end, creating the
interface events of Section~\ref{sec:eint}.
Clients query the monitor via $\precedes$ to determine whether
one completed operation preceded another in real time.

The sequential specification defines the ideal behavior of the
monitor.
The key design decision is that $\precedes(A,B)$ checks whether
$A$'s response record appears before $B$'s invocation record in
the log, directly mirroring the definition of $<_H$.
Queries require both response records to be present, since $<_H$
is only defined for completed operations.

\begin{definition}[Causal Monitor --- Sequential Specification]
\label{def:causal-monitor}
Let $\mathcal{I}$ be a universe of operation identifiers.
The abstract state is a sequence
$\mathcal{S} \in (\{I,R\} \times \mathcal{I})^*$,
initially empty; $\mathit{idx}(e,\mathcal{S})$ denotes the
position of $e$ in $\mathcal{S}$, or $\infty$ if absent.

\smallskip\noindent
\textbf{Modification.}
$\addinv(\mathit{id})$ requires $(I,\mathit{id}) \notin
\mathcal{S}$ and appends $(I,\mathit{id})$.
$\addres(\mathit{id})$ requires $(I,\mathit{id}) \in \mathcal{S}$
and $(R,\mathit{id}) \notin \mathcal{S}$, and appends
$(R,\mathit{id})$.

\smallskip\noindent
\textbf{Queries} (both require $(R,\mathit{id}_A),
(R,\mathit{id}_B) \in \mathcal{S}$; state unchanged).
$\precedes(\mathit{id}_A, \mathit{id}_B)$ returns $\mathtt{true}$
iff $\mathit{idx}((R,\mathit{id}_A),\mathcal{S}) 
\mathit{idx}((I,\mathit{id}_B),\mathcal{S})$.
$\concurrent(\mathit{id}_A,\mathit{id}_B)$ returns $\mathtt{true}$
iff $\neg\,\precedes(\mathit{id}_A,\mathit{id}_B) \land
\neg\,\precedes(\mathit{id}_B,\mathit{id}_A)$.
\end{definition}

\section{Implementations}
\label{sec:implementations}

We present two implementations of the Causal Monitor.
The first uses a shared counter and is \emph{linearizable}.
The second uses an iterative collect of process registers and is only
\emph{quiescently consistent}.
Despite this internal difference, Section~\ref{sec:correctness} shows
that both resolve $\mathbf{COP}$ with identical external guarantees:
under external placement, both are sound, and neither can be complete
(Theorem~\ref{thm:impossibility}).
The placement of interface events determines what $\mathbf{COP}$
guarantees a monitor provides; the internal consistency of the
timestamp mechanism does not.

Both implementations were introduced in~\cite{RC24} as practical
monitoring tools.
However, \cite{RC24} did not establish their consistency properties
with respect to the Causal Monitor sequential specification, nor did
it recognize that the two mechanisms provide the same $\mathbf{COP}$
guarantees despite their internal differences.
That analysis is a contribution of the present work.

\subsection{Counter Monitor}
\label{sec:counter}

The Counter Monitor assigns each interface event a numeric timestamp
read from a shared counter $C$.
Precedence reduces to integer comparison: $\precedes(A,B)$ holds iff
the timestamp of $A$'s response is strictly less than the timestamp
of $B$'s invocation.

We consider two variants, formalized in Algorithm~\ref{alg:counter}.
\textsc{FAInc} reads the counter with a single atomic
\texttt{getAndIncrement}, producing distinct timestamps in real-time
order.
\textsc{Striped} uses a striped counter structured as an array of
per-process cells (analogous to Java's \texttt{LongAdder}~\cite{lea-longadder}),
where each process increments its own cell and reads the global sum
non-atomically.
This eliminates the cache-line contention of a centralized atomic
counter at the cost of allowing two concurrent events to receive the
same timestamp.

\begin{algorithm}[htb]
\caption{Counter-based Causal Monitor}
\label{alg:counter}
\begin{algorithmic}[1]
\Statex \textbf{Shared:} $\Reg[1\dots N]$ (SWMR, append-only);\quad
        counter $C$
\Statex \textbf{Parameter:} $\mathit{Mode} \in
        \{\textsc{FAInc},\,\textsc{Striped}\}$
\Function{Stamp}{}
  \If{$\mathit{Mode} = \textsc{FAInc}$}
    \State \Return $C.\texttt{getAndIncrement}()$
        \Comment{single atomic step}
  \Else
    \State $C.\texttt{increment}()$
    \State \Return $C.\texttt{sum}()$
        \Comment{non-atomic: increment own cell, read all cells}
  \EndIf
\EndFunction
\Procedure{Add\_Inv}{$\mathit{id}$}
  \State $t \gets \textsc{Stamp}()$;\quad
         $\Reg[i] \gets \Reg[i]\cdot\langle(I,\mathit{id},t)\rangle$
\EndProcedure
\Procedure{Add\_Res}{$\mathit{id}$}
  \State $t \gets \textsc{Stamp}()$;\quad
         $\Reg[i] \gets \Reg[i]\cdot\langle(R,\mathit{id},t)\rangle$
\EndProcedure
\Function{Precedes}{$\mathit{id}_A,\mathit{id}_B$}
  \State $t_A \gets$ timestamp stored with $(R,\mathit{id}_A)$
  \State $t_B \gets$ timestamp stored with $(I,\mathit{id}_B)$
  \State \Return $t_A < t_B$
\EndFunction
\end{algorithmic}
\end{algorithm}

\begin{proposition}[Counter Monitor is linearizable]
\label{prop:counter-lin}
Both \textsc{FAInc} and \textsc{Striped} produce linearizable
executions of the Causal Monitor.
\end{proposition}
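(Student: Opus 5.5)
The plan is to prove linearizability directly through (L1) and (L2). For each history $H$ of the Counter Monitor I will build a sequential history $S$ and check two things: that $S$ is legal for Definition~\ref{def:causal-monitor}, and that $S$ extends $<_H$. The modification operations $\addinv$ and $\addres$ will be ordered by the timestamp $t$ returned by \textsc{Stamp}. The queries $\precedes$ and $\concurrent$ do not change the abstract state, so they will be inserted afterwards.

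\textbf{Step 1: timestamps respect real time.} Let $e_1,e_2$ be modification operations with $e_1 <_H e_2$. I will show $t(e_1) < t(e_2)$.
\begin{itemize}
\item For \textsc{FAInc} this is immediate. \texttt{getAndIncrement} is atomic and returns strictly increasing values in the order of its linearization points, and every timestamp is distinct.
\item For \textsc{Striped} I use the increment-only restriction, which makes every cell monotone. Hence the value returned by a non-atomic \texttt{sum} lies between the true total at the start of the read and the true total at its end.
\item Since $e_1$ finishes before $e_2$ starts, $t(e_1)$ is at most the total just before $e_2$ increments. Because $e_2$ increments its own cell before reading, $t(e_2)$ is at least that total plus one.
\end{itemize}
So in both modes $t(e_1)<t(e_2)$, and equal timestamps can occur only between concurrent events. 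In particular, $(I,\mathit{id})$ and $(R,\mathit{id})$ of the same operation are related by program order and never tie.

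\textbf{Step 2: order and legality of modifications.} Order the modification operations by increasing $t$. Ties (only possible in \textsc{Striped}) are broken by putting every $I$-record before every $R$-record with the same timestamp; the remaining ties are broken arbitrarily. By Step~1 this order extends $<_H$ on modifications. The preconditions of $\addinv$ and $\addres$ hold because $(I,\mathit{id})$ gets a strictly smaller timestamp than $(R,\mathit{id})$.

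The returned values then match the specification. Let $t_A$ be the timestamp stored with $(R,\mathit{id}_A)$ and $t_B$ the one stored with $(I,\mathit{id}_B)$.
\begin{itemize}
\item If $t_A<t_B$, then $(R,\mathit{id}_A)$ precedes $(I,\mathit{id}_B)$ in $S$.
\item If $t_A>t_B$, then $(R,\mathit{id}_A)$ follows $(I,\mathit{id}_B)$ in $S$.
\item If $t_A=t_B$, the tie-break places $(I,\mathit{id}_B)$ first.
\end{itemize}
Thus $\mathit{idx}((R,\mathit{id}_A),\mathcal{S}) < \mathit{idx}((I,\mathit{id}_B),\mathcal{S})$ holds iff $t_A<t_B$, which is exactly the comparison made by \textsc{Precedes}. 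The case of $\concurrent$ follows because it is defined from two $\precedes$ answers.

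\textbf{Step 3: placing queries, the expected main obstacle.} A query's answer depends only on the relative order of two records. That order is already fixed once both records exist, since each record's timestamp is computed before the record is appended to $\Reg[i]$ and is never changed. I will therefore insert each query at any position after both of its $R$-records and consistent with $<_H$.

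The difficulty is showing that such a combined total order exists. To do this I will take the union of $<_H$ with the timestamp order on modifications and argue that it is acyclic.
\begin{itemize}
\item Every modification receives a real-time \emph{witness point}: for \textsc{FAInc}, the atomic step; for \textsc{Striped}, a point inside its \texttt{sum} at which the running total equals $t$, or just after it for ties.
\item By monotonicity, these witness points can be chosen nondecreasing in timestamp order, and each lies within its operation's interval.
\end{itemize}
This gives genuine linearization points $\mathit{LP}(\mathit{op})\in[\inv(\mathit{op}),\res(\mathit{op})]$ for the modifications. A query is linearized at its response point. Because a query requires both $R$-records to be present, both records' linearization points precede it.

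Sorting all operations by linearization point then yields a legal $S$ satisfying (L1) and (L2). The technical care goes into choosing the witness points for tied \textsc{Striped} timestamps so that the tie-break order is respected.
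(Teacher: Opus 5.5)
Your Steps 1 and 2 are correct. In one respect they are sharper than the paper's proof. The paper breaks equal \textsc{Striped} timestamps arbitrarily. But suppose $\addres(\mathit{id}_A)$ and $\addinv(\mathit{id}_B)$ tie and the tie is broken with $(R,\mathit{id}_A)$ first. Then the specification answers $\precedes(\mathit{id}_A,\mathit{id}_B)=\mathtt{true}$, while the implementation evaluates $t_A<t_B$ to $\mathtt{false}$. Your rule of placing $I$-records before $R$-records at equal timestamps is exactly what is needed. Your explicit placement of queries is also absent from the paper.

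The gap is the step you flag yourself in Step 3. You assert that witness points can be chosen to respect the tie-break, but the recipe you give fails: a point inside the call's \texttt{sum} where the total equals $t$, or just after it. Here is an execution that breaks it:
\begin{enumerate}
\item Start from total $t-2$.
\item Let $p_1$ increment for $\addinv(\mathit{id}_B)$, bringing the total to $t-1$.
\item Let $p_2$ increment for $\addres(\mathit{id}_A)$, bringing the total to $t$, and run its entire traversal, returning $t$.
\item Only then let $p_1$ traverse; it also returns $t$.
\end{enumerate}
Every instant of $p_2$'s traversal precedes every instant of $p_1$'s. So any witness chosen inside or just after the traversals puts $(R,\mathit{id}_A)$ before $(I,\mathit{id}_B)$, which is the order your tie-break forbids. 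The paper's own linearization point, an instant during the traversal where the sum equals $t$, has the same defect.

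The missing idea is to anchor the linearization points at the increments rather than in the traversals. Let $\alpha_t$ be the instant of the increment that brings the global total to $t$. Suppose a \textsc{Stamp} call returns $t$.
\begin{itemize}
\item Its returned sum is at least the true total when its traversal starts, and that total already includes its own increment. So that increment is the $k$-th overall for some $k\le t$, and it occurs no later than $\alpha_t$.
\item Cells only grow, so the true total at the time of its last read is at least $t$. Hence $\alpha_t$ precedes that read.
\end{itemize}
Thus $\alpha_t$ lies in the interval of every call returning $t$. You may place all their linearization points in the gap immediately after $\alpha_t$, with $I$-records first. Since $\alpha_t\prec\alpha_{t'}$ whenever $t<t'$, these points also respect timestamp order across groups, and each lies inside its $\addinv$/$\addres$ interval. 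The rest of your Step 3 then goes through unchanged: queries are placed at their responses, after both $R$-records have been appended.
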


\begin{proof}
\emph{\textsc{FAInc}.}
The linearization point of each $\addinv$ or $\addres$ call is the
atomic \texttt{getAndIncrement} instruction.
Timestamps are strictly increasing in real time, so the sequential
history ordered by timestamp satisfies the specification.

\emph{\textsc{Striped}.}
See Appendix~\ref{app:striped}.
\end{proof}

\noindent
The \textsc{FAInc} counter was introduced in~\cite{RC24} grouped
with the Collect monitor (Algorithm~\ref{alg:collect}) as a variant of the same approach.
\textsc{Striped} is introduced in this work and was first described in~\cite{lea-longadder}.
We show in Section~\ref{sec:correctness} that this grouping was
imprecise: both counters are linearizable, whereas \textsc{Collect}
is only quiescently consistent.
All three are sound for $\mathbf{COP}$ under external placement,
but they are not equivalent as concurrent objects.

\subsection{Collect Monitor}
\label{sec:collect}

The Collect Monitor keeps the state explicit.
On $\addres$, the calling process reads all process registers
iteratively --- a non-atomic \emph{collect} --- and stores the
resulting \emph{view} $v$ alongside the response event.
Precedence is then a membership test on stored views.

\begin{algorithm}[H]
\caption{Collect-based Causal Monitor}
\label{alg:collect}
\begin{algorithmic}[1]
\Statex \textbf{Shared:} $\Reg[1\dots N]$ (SWMR, append-only sequences
        of events)
\Procedure{Add\_Inv}{$\mathit{id}$}
  \State $\Reg[i] \gets \Reg[i]\cdot\langle(I,\mathit{id})\rangle$
\EndProcedure
\Procedure{Add\_Res}{$\mathit{id}$}
  \State $v \gets \bigcup_{j=1}^{N}\Reg[j]$
         \Comment{iterative, non-atomic read}
  \State $\Reg[i] \gets \Reg[i]\cdot\langle(R,\mathit{id},v)\rangle$
\EndProcedure
\Function{Precedes}{$\mathit{id}_A,\mathit{id}_B$}
  \State $v_A \gets$ view stored with $(R,\mathit{id}_A)$
  \State $v_B \gets$ view stored with $(R,\mathit{id}_B)$
  \State \Return
    $[(R,\mathit{id}_A) \in v_B]$
    $\;\land\;[(I,\mathit{id}_B) \notin v_A]$
    $\;\land\;[v_A \subseteq v_B]$
\EndFunction
\end{algorithmic}
\end{algorithm}

\noindent
The first two conjuncts capture precedence directly: $A$'s response
was visible when $B$'s collect ran, and $B$'s invocation was not yet
visible when $A$'s collect ran.
The third conjunct, $v_A \subseteq v_B$, is needed for transitivity:
without it, concurrent collects may produce incomparable views that
break transitivity of the reported order.
Its role is purely structural and does not affect soundness.
\section{Correctness}
\label{sec:correctness}

This section establishes two independent facts.
First, both monitors are sound for $\mathbf{COP}$ under external
placement (Theorems~\ref{thm:counter-sound}
and~\ref{thm:collect-sound}): whenever a monitor reports
$\precedes(A,B) = \mathtt{true}$, the operation $\mathit{op}_A$
genuinely completed before $\mathit{op}_B$ began in real time,
i.e.\ $\mathit{op}_A <_H \mathit{op}_B$.
Second, the Counter Monitor is linearizable
(Proposition~\ref{prop:counter-lin}) while the Collect Monitor is
quiescently consistent but not linearizable
(Theorems~\ref{thm:collect-qc} and~\ref{thm:collect-not-lin}).
The main consequence is that the internal
consistency gap between the two monitors is invisible at the
level of $\mathbf{COP}$: placement alone determines what the
problem guarantees.

\subsection{Soundness}
\label{sec:soundness}

\begin{theorem}[Counter Monitor is sound for $\mathbf{COP}$]
\label{thm:counter-sound}
Under external placement, Algorithm~\ref{alg:counter} is sound:
\[
\precedes(A,B) = \mathtt{true}
\;\Rightarrow\; \mathit{op}_A <_H \mathit{op}_B.
\]
\end{theorem}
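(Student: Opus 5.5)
The plan is a chain of real-time inequalities through the counter's monotonicity. Suppose $\precedes(A,B)=\mathtt{true}$, so $t_A < t_B$. Here $t_A$ is the timestamp returned by the \textsc{Stamp} call inside $\eres(\mathit{op}_A)$, and $t_B$ is the one returned inside $\einv(\mathit{op}_B)$. The goal is to show $\res(\mathit{op}_A) \prec \inv(\mathit{op}_B)$. External placement gives
\[
\res(\mathit{op}_A) \prec \lp(\eres(\mathit{op}_A))
\quad\text{and}\quad
\lp(\einv(\mathit{op}_B)) \prec \inv(\mathit{op}_B).
\]
So it suffices to prove the key step
\[
t_A < t_B \;\Rightarrow\; \lp(\eres(\mathit{op}_A)) \prec \lp(\einv(\mathit{op}_B)).
\]
Chaining the three inequalities then yields $\res(\mathit{op}_A)\prec\inv(\mathit{op}_B)$, i.e.\ $\mathit{op}_A <_H \mathit{op}_B$.

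For \textsc{FAInc}, take $\lp$ of each interface event to be its atomic \texttt{getAndIncrement}, as in Proposition~\ref{prop:counter-lin}. Since the counter only increases, timestamps are strictly increasing in real-time order of these steps. I argue by contrapositive: if $\lp(\einv(\mathit{op}_B)) \prec \lp(\eres(\mathit{op}_A))$, then $t_B < t_A$. The two points are distinct, so this gives the key step.

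For \textsc{Striped}, take the linearization points constructed in Appendix~\ref{app:striped} (Proposition~\ref{prop:counter-lin}). The needed fact is again contrapositive: if $\einv(\mathit{op}_B)$ is linearized before $\eres(\mathit{op}_A)$, then $t_B \le t_A$. To see this, suppose $B$'s stamp is ordered first. Then $B$'s \texttt{sum} counts only increments performed so far. $A$'s \texttt{increment} of its own cell, together with every increment $B$ counted, has completed before $A$'s \texttt{sum} finishes reading. Since cells never decrease, $A$'s sum is at least $B$'s, so $t_B \le t_A$, contradicting $t_A<t_B$.

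The main obstacle is this \textsc{Striped} step. Because \texttt{sum} is non-atomic, ties and interleaved reads are possible, so the argument must rely on the increment-only monotonicity highlighted in the introduction. It must also use the specific linearization points from Appendix~\ref{app:striped}: I would place each at the completion of the event's \texttt{sum}, ordering ties by that. With that choice, I expect the only delicate case to be overlapping \texttt{sum} scans. There I would argue that the strict inequality $t_A<t_B$ forces $B$'s scan to observe an increment that occurred after $A$'s own increment, hence after $A$'s interface event was already under way.
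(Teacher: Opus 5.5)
Your chain $\res(\mathit{op}_A) \prec \sigma_A \prec \sigma_B \prec \inv(\mathit{op}_B)$ is the same as the paper's, and your \textsc{FAInc} case is fine. The \textsc{Striped} key step, however, fails for the linearization point you commit to. If $\lp$ of a \textsc{Stamp} is the completion of its \texttt{sum} scan, then $t_A<t_B$ does not imply that $A$'s scan finishes first. Let $\addres(A)$ run on $p_2$ and $\addinv(B)$ on $p_1$, so scans read $C[1]$ then $C[2]$, and start with $C[1]=0$, $C[2]=c$. The execution runs as follows. $A$'s stamp increments $C[2]$ and reads $C[1]=0$. Then $B$'s stamp increments $C[1]$, reads $C[1]=1$ and $C[2]=c+1$, and returns $t_B=c+2$. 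Finally $A$ reads $C[2]=c+1$ and returns $t_A=c+1$. So $t_A<t_B$, yet $B$'s scan completes first. The faulty step is ``every increment $B$ counted has completed before $A$'s sum finishes reading \ldots\ so $A$'s sum is at least $B$'s''. $A$'s scan reads each cell only once, possibly before $B$'s increment to that cell, so an increment that finishes before $A$'s scan ends need not be counted by it. This is exactly the overlapping-scan case you deferred. With this choice of $\lp$ it is not merely delicate but false.

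The linearization point the appendix actually constructs is different. It is an instant $\sigma$ inside the scan at which the true global sum $s(\sigma)=\sum_j C[j]$ equals the returned value. Such an instant exists because $s$ at the scan's start is at most $t$, $s$ at its end is at least $t$, and $s$ moves in unit steps. With that choice the key step takes one line and needs no case analysis. Since $s$ is non-decreasing, $\sigma_B \preceq \sigma_A$ would give $t_B=s(\sigma_B)\le s(\sigma_A)=t_A$, so $t_A<t_B$ forces $\sigma_A\prec\sigma_B$. Your closing remark can also be turned into a proof that avoids linearization points altogether. $t_A$ is at least the global sum right after $A$'s increment, and $t_B$ is at most the global sum at the end of $B$'s scan. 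Hence $t_A<t_B$ places $A$'s increment before the end of $B$'s scan. That suffices only if you read external placement the way the paper's proof does: the whole $\addres(A)$ call follows $\res(\mathit{op}_A)$, and the whole $\addinv(B)$ call precedes $\inv(\mathit{op}_B)$. The comparison of scan completions you set up does not give it.
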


\begin{proof}
$\precedes(A,B) = \mathtt{true}$ means $t_A < t_B$, where $t_A$
is the timestamp stamped during $\addres(A)$ and $t_B$ is the
timestamp stamped during $\addinv(B)$.
In both \textsc{FAInc} and \textsc{Striped}, timestamps are
monotone in real time: a strictly smaller timestamp means the
corresponding \textsc{Stamp} call completed before the other
began.
Let $\sigma_A$ be the linearization point of the \textsc{Stamp}
call inside $\addres(A)$, and $\sigma_B$ the linearization point
of the \textsc{Stamp} call inside $\addinv(B)$.
Then $t_A < t_B$ implies $\sigma_A \prec \sigma_B$.

External placement gives
$\res(\mathit{op}_A) \prec \sigma_A$,
since $\addres(A)$ is invoked after $\mathit{op}_A$ returns and
$\sigma_A$ lies within $\addres(A)$'s execution interval.
Similarly, $\sigma_B \prec \inv(\mathit{op}_B)$,
since $\addinv(B)$ is invoked before $\mathit{op}_B$ starts and
$\sigma_B$ lies within $\addinv(B)$'s execution interval.
Chaining,
\[
  \res(\mathit{op}_A)
  \prec \sigma_A
  \prec \sigma_B
  \prec \inv(\mathit{op}_B),
\]
so $\res(\mathit{op}_A) \prec \inv(\mathit{op}_B)$,
i.e.\ $\mathit{op}_A <_H \mathit{op}_B$.
\end{proof}

\begin{theorem}[Collect Monitor is sound for $\mathbf{COP}$]
\label{thm:collect-sound}
Under external placement, Algorithm~\ref{alg:collect} is sound:
\[
\precedes(A,B) = \mathtt{true}
\;\Rightarrow\; \mathit{op}_A <_H \mathit{op}_B.
\]
\end{theorem}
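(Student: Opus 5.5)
The argument mirrors Theorem~\ref{thm:counter-sound}: locate one real-time instant between $\res(\mathit{op}_A)$ and $\inv(\mathit{op}_B)$, this time using the stored views. The key extra fact is that registers are SWMR and append-only. Hence once a record is appended to $\Reg[j]$, every later read of $\Reg[j]$ contains it, and a read that misses a record happened before that record's write. Only the second conjunct of $\precedes$, namely $(I,\mathit{id}_B)\notin v_A$, is needed. As the text after Algorithm~\ref{alg:collect} notes, the inclusion $v_A \subseteq v_B$ plays no role in soundness, and neither does $(R,\mathit{id}_A)\in v_B$.

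First, I fix notation. Let $w_B$ be the write that appends $(I,\mathit{id}_B)$ to $\Reg[j]$ inside $\addinv(B)$, where $p_j$ runs $\mathit{op}_B$. Let $r_A^j$ be the read of $\Reg[j]$ performed during the collect inside $\addres(A)$. Under the ideal log (Definition~\ref{def:ideal-log}) each of these has a linearization point. External placement (Definition~\ref{def:placement}) then gives
\[
\lp(w_B) \prec \inv(\mathit{op}_B)
\qquad\text{and}\qquad
\res(\mathit{op}_A) \prec \lp(r_A^j).
\]
The first holds because $w_B$ lies within $\addinv(B)$, which completes before $\mathit{op}_B$ is invoked. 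The second holds because the whole collect lies within $\addres(A)$, which starts after $\mathit{op}_A$ returns.

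Second, I use $(I,\mathit{id}_B)\notin v_A$. Since $v_A$ contains $\Reg[j]$ as returned by $r_A^j$, that read did not see $w_B$. Atomic, append-only registers then force $\lp(r_A^j) \prec \lp(w_B)$; otherwise the read would return a sequence containing $(I,\mathit{id}_B)$. Chaining the three inequalities gives
\[
\res(\mathit{op}_A) \prec \lp(r_A^j) \prec \lp(w_B) \prec \inv(\mathit{op}_B),
\]
so $\mathit{op}_A <_H \mathit{op}_B$.

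The main obstacle is that the collect is non-atomic. One must argue per register, using the specific read $r_A^j$ of $p_j$'s register, rather than about an instantaneous snapshot. It must also be justified that absence from the union $v_A$ implies absence from that particular read. This holds because $(I,\mathit{id}_B)$ is written only to $\Reg[j]$, since identifiers are unique and registers are single-writer. A minor point is that $\precedes$ requires $(R,\mathit{id}_A)$ to exist, so $\addres(A)$ and its collect actually occurred, and the argument above applies.
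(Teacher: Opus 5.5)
Your proof is correct and follows the paper's argument essentially step for step. You use only the conjunct $(I,\mathit{id}_B)\notin v_A$, pick out the read of $p_j$'s register during $A$'s collect that missed the write of $(I,\mathit{id}_B)$, and use external placement to chain $\res(\mathit{op}_A) \prec \rho \prec w_I \prec \inv(\mathit{op}_B)$. Your extra care makes explicit what the paper leaves implicit: you argue per register rather than about a snapshot, and you note that $\addinv(B)$ must complete, not merely be invoked, before $\inv(\mathit{op}_B)$.
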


\begin{proof}
Suppose $\precedes(A,B) = \mathtt{true}$.
The predicate has three conjuncts; we use only the second:
$(I,\mathit{id}_B) \notin v_A$.
This means that when $A$'s collect read $\Reg[\mathrm{proc}(B)]$,
the write of $(I,\mathit{id}_B)$ had not yet occurred.
Let $\rho$ be the physical time of that read and $w_I$ the
physical time of the write of $(I,\mathit{id}_B)$;
then $\rho \prec w_I$.

External placement gives two facts.
First, $\addres(A)$ is invoked after $\mathit{op}_A$ returns,
so $\res(\mathit{op}_A) \prec \inv(\addres(A))$; since $\rho$
occurs during $\addres(A)$'s collect,
$\res(\mathit{op}_A) \prec \rho$.
Second, $\addinv(B)$ is invoked before $\mathit{op}_B$ starts,
so the write $w_I$ of $(I,\mathit{id}_B)$, which occurs inside
$\addinv(B)$, satisfies $w_I \prec \inv(\mathit{op}_B)$.

Chaining,
\[
  \res(\mathit{op}_A) \prec \rho \prec w_I \prec \inv(\mathit{op}_B),
\]
so $\res(\mathit{op}_A) \prec \inv(\mathit{op}_B)$,
i.e.\ $\mathit{op}_A <_H \mathit{op}_B$.
\end{proof}

\noindent
This proof uses only the conjunct $(I,\mathit{id}_B) \notin v_A$.
The first conjunct $(R,\mathit{id}_A) \in v_B$ and the
view-containment conjunct $v_A \subseteq v_B$ play no role in
soundness.
Their purpose is structural: the first contributes to detecting
precedence, and the second restores transitivity of $\precedes$,
as established in Lemma~\ref{lem:partial-order}.
Soundness holds independently of whether $\precedes$ is transitive.
\subsection{Consistency of the Collect Monitor}
\label{sec:collect-consistency}


\begin{lemma}[$\precedes$ is a strict partial order under Collect]
\label{lem:partial-order}
With the view-containment conjunct, $\precedes$ is irreflexive,
antisymmetric, and transitive.
\end{lemma}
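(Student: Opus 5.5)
Write $A \to B$ for $\precedes(A,B)=\mathtt{true}$; by Algorithm~\ref{alg:collect} this means
\[
(R,\mathit{id}_A)\in v_B \;\land\; (I,\mathit{id}_B)\notin v_A \;\land\; v_A\subseteq v_B .
\]
The plan is to work purely with views as sets of records, together with two structural facts about the algorithm.

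\begin{enumerate}[(F1)]
\item \emph{Own-invocation fact.} The record $(I,\mathit{id}_A)$ is appended to $\Reg[\mathrm{proc}(A)]$ during $\addinv(A)$. This happens before the collect in $\addres(A)$ starts, and registers are append-only. Hence $(I,\mathit{id}_A)\in v_A$.
\item \emph{Own-response fact.} The collect producing $v_A$ completes before $(R,\mathit{id}_A,v_A)$ is written. Hence $(R,\mathit{id}_A)\notin v_A$.
\end{enumerate}

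Both facts come straight from program order in Algorithm~\ref{alg:collect}.

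\textbf{Irreflexivity.} $A\to A$ would require $(I,\mathit{id}_A)\notin v_A$, contradicting (F1). (Equally, it would require $(R,\mathit{id}_A)\in v_A$, contradicting (F2).)

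\textbf{Antisymmetry.} Suppose $A\to B$ and $B\to A$. View containment in both directions gives $v_A=v_B$. The first conjunct of $A\to B$ gives $(R,\mathit{id}_A)\in v_B=v_A$, contradicting (F2). So the two relations cannot both hold. Asymmetry then implies antisymmetry vacuously.

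\textbf{Transitivity.} Assume $A\to B$ and $B\to C$; we check the three conjuncts of $A\to C$.
\begin{itemize}
\item \emph{Containment.} $v_A\subseteq v_B\subseteq v_C$, so $v_A\subseteq v_C$.
\item \emph{Response visible.} From $A\to B$, $(R,\mathit{id}_A)\in v_B\subseteq v_C$.
\item \emph{Invocation not visible.} We need $(I,\mathit{id}_C)\notin v_A$. From $B\to C$ we have $(I,\mathit{id}_C)\notin v_B$. Since $v_A\subseteq v_B$, any element of $v_A$ lies in $v_B$. Therefore $(I,\mathit{id}_C)\notin v_A$.
\end{itemize}

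The only genuinely non-trivial point is justifying (F1) and (F2) carefully in the presence of the non-atomic collect. For (F1) the argument is program order at $\mathrm{proc}(A)$: $\addinv(A)$ precedes $\addres(A)$ and registers are append-only, so the collect's read of $\Reg[\mathrm{proc}(A)]$ sees the invocation record. For (F2) the relevant step is the write ordering within $\addres(A)$. Everything else reduces to set inclusion, which is exactly why the view-containment conjunct is what makes transitivity go through.
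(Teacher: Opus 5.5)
Your proof is correct. Transitivity is the same argument as the paper's, but your antisymmetry argument takes a different route.

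\begin{itemize}
\item \emph{Antisymmetry.} You use view containment in both directions to get $v_A = v_B$, and then the own-response fact (F2) gives the contradiction. The paper does not use (C3) here. It argues that $(R,\mathit{id}_A)\in v_B$ forces $(I,\mathit{id}_A)\in v_B$, because both records live in the same append-only register $\Reg[\mathrm{proc}(A)]$ and $(I,\mathit{id}_A)$ was appended first. That directly refutes the second conjunct of $\precedes(B,A)$.
\item \emph{Irreflexivity.} Your primary argument uses (F1) against the second conjunct. The paper uses (F2) against the first conjunct, which is your alternative.
\end{itemize}

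What each route buys:
\begin{itemize}
\item Yours relies only on program-order facts about a process's own calls. It never has to reason about what a read of another process's register returns (the per-register prefix property). The price is that your antisymmetry depends on the view-containment conjunct.
\item The paper's route uses that prefix property. In exchange, it shows that irreflexivity and asymmetry already hold for the two-conjunct predicate. This is what justifies its closing remark that (C3) is needed only for transitivity.
\end{itemize}
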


\begin{proof}[Proof sketch]
\emph{Irreflexivity} follows because a collect completes before
its own response is written, so $(R,A) \notin v_A$.
\emph{Antisymmetry} follows because if $\precedes(A,B)$ then
$(I,A) \in v_B$, which refutes the conjunct $(I,A) \notin v_B$
required by $\precedes(B,A)$.
\emph{Transitivity} follows from view-containment: if
$\precedes(A,B)$ and $\precedes(B,C)$ then
$v_A \subseteq v_B \subseteq v_C$, from which all three conjuncts
of $\precedes(A,C)$ can be derived.
See Appendix~\ref{app:partial-order} for the full proof.
\end{proof}


\begin{theorem}[Collect Monitor is not linearizable]
\label{thm:collect-not-lin}
Algorithm~\ref{alg:collect} is not sequentially consistent, hence
not linearizable.
\end{theorem}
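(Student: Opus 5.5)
The plan is to build one concrete finite execution of Algorithm~\ref{alg:collect} whose $\precedes$ answers cannot be reproduced by any legal sequential history that respects program order. Non-linearizability then follows from the containment $\text{Linearizable} \subsetneq \text{Sequentially Consistent}$.

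The first step is to extract a cyclic obstruction from the specification (Definition~\ref{def:causal-monitor}). Take four monitored operations $a,c$ on $p_1$ (in this order) and $b,d$ on $p_2$ (in this order). Program order (SC2) forces, in any candidate log $\mathcal{S}$,
\[
(R,a) <_{\mathcal{S}} (I,c) \quad\text{and}\quad (R,b) <_{\mathcal{S}} (I,d).
\]
If the execution returns $\precedes(a,d)=\mathtt{false}$, the specification forces $(I,d) <_{\mathcal{S}} (R,a)$. If it returns $\precedes(b,c)=\mathtt{false}$, the specification forces $(I,c) <_{\mathcal{S}} (R,b)$. Together these four constraints give the cycle
\[
(I,d) < (R,a) < (I,c) < (R,b) < (I,d),
\]
so no legal sequential history exists. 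To keep program order from interfering with the queries, I will issue them from an extra process $p_5$ after all monitored operations have completed. The queries leave the state unchanged, so placing them at the end of $\mathcal{S}$ is harmless.

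The second step is to make the implementation return $\mathtt{false}$ for both queries, even though $a$ visibly finishes before $d$ and $b$ before $c$. The tool is the view-containment conjunct $v_A \subseteq v_B$ together with the non-atomicity of the collect. Add helper processes $p_3$ and $p_4$, each performing a single $\addinv$, of $e$ and $f$ respectively. Then schedule the collects as follows.
\begin{itemize}
\item $d$'s collect reads $\Reg[3]$ early, before $(I,e)$ is written. It reads $\Reg[1]$ late, after $(R,a)$ is written, so $(R,a) \in v_d$.
\item $a$'s collect reads $\Reg[3]$ at its very end, after $(I,e)$ is written, so $(I,e) \in v_a \setminus v_d$.
\item Symmetrically, $c$'s collect reads $\Reg[4]$ before $(I,f)$ is written, while $b$'s collect reads $\Reg[4]$ after it, so $(I,f) \in v_b \setminus v_c$.
\end{itemize}
Consequently $v_a \not\subseteq v_d$ and $v_b \not\subseteq v_c$, hence $\precedes(a,d)=\precedes(b,c)=\mathtt{false}$.

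The collects of $a$ and $d$ must overlap in real time for this to work, and likewise those of $b$ and $c$. This is possible because each collect is a sequence of $N$ separate reads and the processes are asynchronous. I will write out the explicit interleaving of the reads and writes and check each register read against the append-only contents at that moment.

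The main obstacle is the scheduling itself. The problem is that $c$ follows $a$ on $p_1$ and $d$ follows $b$ on $p_2$, yet $d$'s collect must read $\Reg[3]$ before $a$'s collect does, and $c$'s collect must read $\Reg[4]$ before $b$'s collect does. My first attempt will use the following order:
\begin{enumerate}
\item Run $p_1$'s and $p_2$'s first collects (for $a$ and $b$) up to, but not including, their reads of $\Reg[3]$ and $\Reg[4]$ respectively.
\item Let $a$ and $b$ write their responses so that they are seen later. This requires $a$ to read $\Reg[3]$ before writing $(R,a)$, which conflicts with the previous step.
\end{enumerate}
If that conflict cannot be avoided, I will fall back to a fallback that sidesteps it. In the fallback, $\precedes(a,d)$ is made $\mathtt{false}$ via the conjunct $(I,d) \in v_a$, since $d$'s invocation can be written while $a$'s collect is still pending. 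Only the second query then relies on the view-containment conjunct. I will check that this relaxation still yields the same cycle.
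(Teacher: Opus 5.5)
\textbf{The planned execution cannot exist.} The conflict you ran into is structural, not a scheduling accident, and the fallback fails for the same reason.

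In Algorithm~\ref{alg:collect}, every $\mathtt{false}$ verdict agrees with the real-time order of the register appends. Suppose the append of $(R,X)$ precedes the append of $(I,Y)$. Then:
\begin{enumerate}
\item $X$'s collect ended before $(I,Y)$ was written, so $(I,Y)\notin v_X$.
\item $Y$'s collect runs inside $\addres(Y)$, after $\addinv(Y)$, hence after $(R,X)$ was written, so $(R,X)\in v_Y$.
\item Every read of $X$'s collect precedes every read of $Y$'s, so append-only registers give $v_X\subseteq v_Y$.
\end{enumerate}
Hence $\precedes(X,Y)=\mathtt{true}$. Contrapositively, $\precedes(X,Y)=\mathtt{false}$ forces the append of $(I,Y)$ to precede that of $(R,X)$, whichever conjunct fails.

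Appends also respect program order. So the log obtained by ordering all $\addinv$/$\addres$ calls by their appends is legal and satisfies every program-order edge and every $\mathtt{false}$ verdict at once. Your cycle $(I,d)<(R,a)<(I,c)<(R,b)<(I,d)$ uses only those two kinds of edges, so no execution produces it.

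Concretely, $v_a\not\subseteq v_d$ needs the start of $d$'s collect to precede the end of $a$'s. Likewise, $v_b\not\subseteq v_c$ needs the start of $c$'s collect to precede the end of $b$'s. Program order on $p_1$ and $p_2$ then gives a real-time cycle: end of $b$'s collect $\prec$ start of $d$'s $\prec$ end of $a$'s $\prec$ start of $c$'s $\prec$ end of $b$'s.

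The fallback fails the same way. $(I,d)\in v_a$ puts the end of $a$'s collect after the append of $(I,d)$, hence after the end of $b$'s collect. Meanwhile $v_b\not\subseteq v_c$ still requires the start of $c$'s collect, which follows the end of $a$'s, to precede the end of $b$'s.

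\textbf{What is missing.} The violation has to involve $\mathtt{true}$ verdicts. Unlike $\mathtt{false}$ ones, they can contradict the append order: $X$'s collect may finish before $(I,Y)$ is written, the append of $(R,X)$ may be delayed past $(I,Y)$, and $Y$'s later collect still sees $(R,X)$. The paper exploits exactly this with four operations on distinct processes and crossing collects: $B$ sees $R_A$ but not $R_C$, and $D$ sees $R_C$ but not $R_A$. This yields $\precedes(A,B)=\precedes(C,D)=\mathtt{true}$ and $\precedes(C,B)=\precedes(A,D)=\mathtt{false}$, hence the cycle $R_A<_S I_B<_S R_C<_S I_D<_S R_A$.

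Your surrounding framework is fine: deriving non-linearizability from the failure of sequential consistency, and issuing the queries from an extra process after all monitored calls. What must change is the cycle itself, so that at least one of its edges comes from a $\mathtt{true}$ verdict.
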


\begin{proof}[Proof sketch]
We exhibit an execution of four operations $A,B,C,D$ on distinct
processes where the collect intervals cross: $B$ observes $A$'s
response but not $C$'s, while $D$ observes $C$'s response but not
$A$'s.
This yields $\precedes(A,B) = \precedes(C,D) = \mathtt{true}$ and
$\precedes(C,B) = \precedes(A,D) = \mathtt{false}$, which forces
any sequential history to satisfy
$R_A \prec_S I_B \prec_S R_C \prec_S I_D \prec_S R_A$,
a cycle.
See Appendix~\ref{app:not-lin} for the full construction.
\end{proof}

\begin{theorem}[Collect Monitor is quiescently consistent]
\label{thm:collect-qc}
Under external placement, Algorithm~\ref{alg:collect} is quiescently
consistent: if a quiescent instant $Q$ satisfies
$\res(\mathit{op}_A) \prec Q \prec \inv(\mathit{op}_B)$,
then $\precedes(A,B) = \mathtt{true}$.
\end{theorem}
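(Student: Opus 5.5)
We prove the stated implication directly. Fix a quiescent instant $Q$ with $\res(\mathit{op}_A) \prec Q \prec \inv(\mathit{op}_B)$. The plan is to check the three conjuncts of $\precedes(A,B)$ in Algorithm~\ref{alg:collect} one at a time. The key step is to make ``quiescent'' apply to the monitor's own calls. Under external placement, $\addres(A)$ follows $\res(\mathit{op}_A)$, and $\addinv(B)$ precedes $\inv(\mathit{op}_B)$. We interpret quiescence as no operation pending, including monitor operations $\addinv$ and $\addres$. Then $\addres(A)$, which is invoked after $\res(\mathit{op}_A)$, cannot be pending at $Q$, so it has completed entirely before $Q$. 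Symmetrically, $\addinv(B)$ has not started by $Q$, so it starts after $Q$. If quiescence were only about target operations, $\addres(A)$ could straddle $Q$. So I would first state explicitly that the quiescent instant is taken over the monitored history including interface calls, as in Definition~\ref{def:qc} applied to the Causal Monitor object.

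With this, the three conjuncts go as follows.
\begin{enumerate}
\item \textbf{Second conjunct, $(I,\mathit{id}_B) \notin v_A$.} $A$'s collect lies inside $\addres(A)$, which ends before $Q$. The write of $(I,\mathit{id}_B)$ lies inside $\addinv(B)$, which begins after $Q$. Every read in the collect therefore precedes that write.
\item \textbf{First conjunct, $(R,\mathit{id}_A) \in v_B$.} The view $v_B$ is produced by $\addres(B)$. That call follows $\addinv(B)$ by well-formedness, so its collect runs entirely after $Q$. The append of $(R,\mathit{id}_A)$ completed before $Q$. Registers are append-only, so the read of $\Reg[\mathrm{proc}(A)]$ during $B$'s collect sees it.
\item \textbf{Third conjunct, $v_A \subseteq v_B$.} Every entry of $v_A$ was read during $A$'s collect, so it was written before $Q$. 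By append-only monotonicity, every such entry is still present when $B$'s collect reads the same register after $Q$.
\end{enumerate}
Hence $\precedes(A,B)=\mathtt{true}$. This needs $\precedes$ to be queryable, i.e. both $(R,\cdot)$ records present, which the specification's precondition already requires.

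To state quiescent consistency in the form of Definition~\ref{def:qc}, we also need a legal sequential history. I would order all $\addinv$ and $\addres$ records by quiescent group, respecting group order. Inside each group, I would use a linear extension of $\precedes$, which is possible because $\precedes$ is a strict partial order by Lemma~\ref{lem:partial-order}. I would then check that queries answered from stored views match index comparison in this order. Queries across groups agree by the argument above. The remaining case is a query within a group, and I expect this to be the main obstacle. The worry is that a query result inside a group may depend on the interleaving of records in a way a single sequence cannot reproduce; this is exactly what Theorem~\ref{thm:collect-not-lin} exploits.

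If I cannot make the within-group case work, I would fall back on reading the theorem as literally stated: the guarantee is exactly that quiescence-separated pairs are reported, as proved above. I would argue that quiescent consistency only constrains cross-group order, and note that within-group queries are reorderable by Definition~\ref{def:qc}.
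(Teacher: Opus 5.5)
\textbf{Gap in the key timing step.} The step you single out as key does not follow under the reading you adopt. You say quiescence means no target operation and no individual monitor call is pending at $Q$. From ``$\addres(A)$ cannot be pending at $Q$'' you conclude ``so it has completed entirely before $Q$'', but that inference fails. External placement only puts the invocation of $\addres(A)$ somewhere after $\res(\mathit{op}_A)$. So $Q$ can fall in an idle gap between the two, and $\addres(A)$ can start after $Q$. Symmetrically, $\addinv(B)$ can finish before $Q$.

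Under your reading the statement is actually false. Put $Q$ in that gap after $\res(\mathit{op}_A)$. Then let $\addinv(B)$ write $(I,\mathit{id}_B)$, and only afterwards let $A$'s collect read $\Reg[\mathrm{proc}(B)]$. Now $(I,\mathit{id}_B)\in v_A$, so $\precedes(A,B)=\mathtt{false}$, although $\res(\mathit{op}_A)\prec Q\prec\inv(\mathit{op}_B)$.

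The fix is to treat each instrumented operation as a single unit, pending from the invocation of its $\addinv$ to the return of its $\addres$. Then $A$'s unit contains $\res(\mathit{op}_A)\prec Q$ and is not pending at $Q$, so it ends before $Q$. Likewise $B$'s unit contains $\inv(\mathit{op}_B)\succ Q$, so it begins after $Q$. The paper's proof simply asserts these two facts (``$A$'s $\addres$ call is invoked after $\mathit{op}_A$ returns and completes before $Q$''), so it relies on this reading implicitly. You should state this reading explicitly in place of the per-call one.

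\textbf{Comparison with the paper.} With that repaired, your three conjunct checks coincide with the paper's proof. Your fallback is also exactly the paper's scope. Its proof establishes only the quiescence-separated implication and ends with a remark that within-group verdicts are unconstrained; it never constructs a legal sequential history.

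Your worry about the within-group case is well founded. The four-operation cycle in the proof of Theorem~\ref{thm:collect-not-lin} lies inside a single quiescent group and already rules out every legal sequential history. So the linear-extension route cannot succeed, and the literal implication is all that can be proved.
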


\begin{proof}[Proof sketch]
External placement ensures all of $A$'s monitor calls complete
before $Q$ and all of $B$'s begin after $Q$.
Since $Q$ is quiescent, $A$'s response is written before $Q$ and
$B$'s invocation after, so $B$'s collect observes $(R,A)$ and
$A$'s collect cannot have observed $(I,B)$; view-containment
follows from the fact that everything $A$ observed was written
before $Q$, which $B$ also observes.
All three conjuncts hold.
See Appendix~\ref{app:collect-qc} for the full proof.
\end{proof}

\subsection{Both monitors solve the same problem}
\label{sec:cop-invariance}

The Counter Monitor is linearizable and the Collect Monitor is only
quiescently consistent.
Nevertheless, both are sound for $\mathbf{COP}$ under external
placement, and neither can be complete
(Theorem~\ref{thm:impossibility}).

This has a practical consequence identified in~\cite{RC24} but not
formally justified there: the Collect Monitor and the \textsc{FAInc}
Counter Monitor were presented as variants of the same approach,
without distinguishing their consistency properties.
We now see that this grouping was correct at the level of
$\mathbf{COP}$ --- both solve the same problem with the same
guarantees --- but incorrect at the level of the concurrent object:
the Counter Monitor is linearizable while the Collect Monitor is not.
The $\mathbf{COP}$ framework is what makes precise why, despite this
difference, they are equivalent for the purpose of causal observation.

\section{Empirical Evaluation}
\label{sec:evaluation}

The theoretical results of Section~\ref{sec:correctness} establish
that the Counter Monitor (\textsc{FAInc} and \textsc{Striped}) and
the Collect Monitor solve $\mathbf{COP}$ with identical guarantees
under external placement, despite their internal consistency
difference.
We evaluate this claim empirically in the context of
\emph{runtime verification (RV) of linearizability}~\cite{RPC26},
which provides a concrete setting where $\mathbf{COP}$ soundness has
a direct operational meaning.

In RV of linearizability, a monitor observes the execution of a
concurrent object and decides whether the observed history is
linearizable~\cite{CR26_JACM}.
A monitor is \emph{RV-complete} if it never rejects a genuinely
linearizable execution, and \emph{RV-sound} if it never accepts a
non-linearizable one.
By Lemma~\ref{lem:placement}, external placement guarantees
$\mathbf{COP}$ soundness: every reported precedence is real.
This means the monitor never introduces false ordering constraints,
so a genuinely linearizable execution will never be falsely rejected
--- both monitors are RV-complete by construction.
RV-soundness, however, is not guaranteed: a non-linearizable
execution may go undetected if the violation falls inside a
concurrent window that the monitor cannot resolve.

The comparison between \textsc{FAInc} and Collect was carried out
in~\cite{RC24, RPC26}, where Collect was shown to match or outperform
\textsc{FAInc} in scalability while providing the same detection
behavior.
The contribution of the present evaluation is to introduce
\textsc{Striped} into this comparison.
\textsc{Striped} is a linearizable counter that eliminates the
cache-line contention of \textsc{FAInc} by distributing increments
across per-process cells.
We ask two questions:

\begin{itemize}
  \item \textbf{RQ1 (Verdict consistency):} Does \textsc{Striped}
        detect linearizability violations at the same rate as
        \textsc{FAInc}, or does its non-atomic sum introduce
        observable differences in detection behavior?
  \item \textbf{RQ2 (Scalability):} Does \textsc{Striped} eliminate
        the contention bottleneck of \textsc{FAInc} while remaining
        competitive with Collect?
\end{itemize}

We implemented all three monitors in Java and evaluated them on a
64-core AMD NUMA architecture using the benchmarking methodology
of Georges et al.~\cite{georges2007java} (5 warmup rounds,
10 measured rounds, 100 runs per cell for verdict experiments).

\subsection{RQ1: Verdict Consistency}
\label{sec:rq1}

\begin{figure}[t]
  \centering
  \includegraphics[width=0.6\columnwidth]{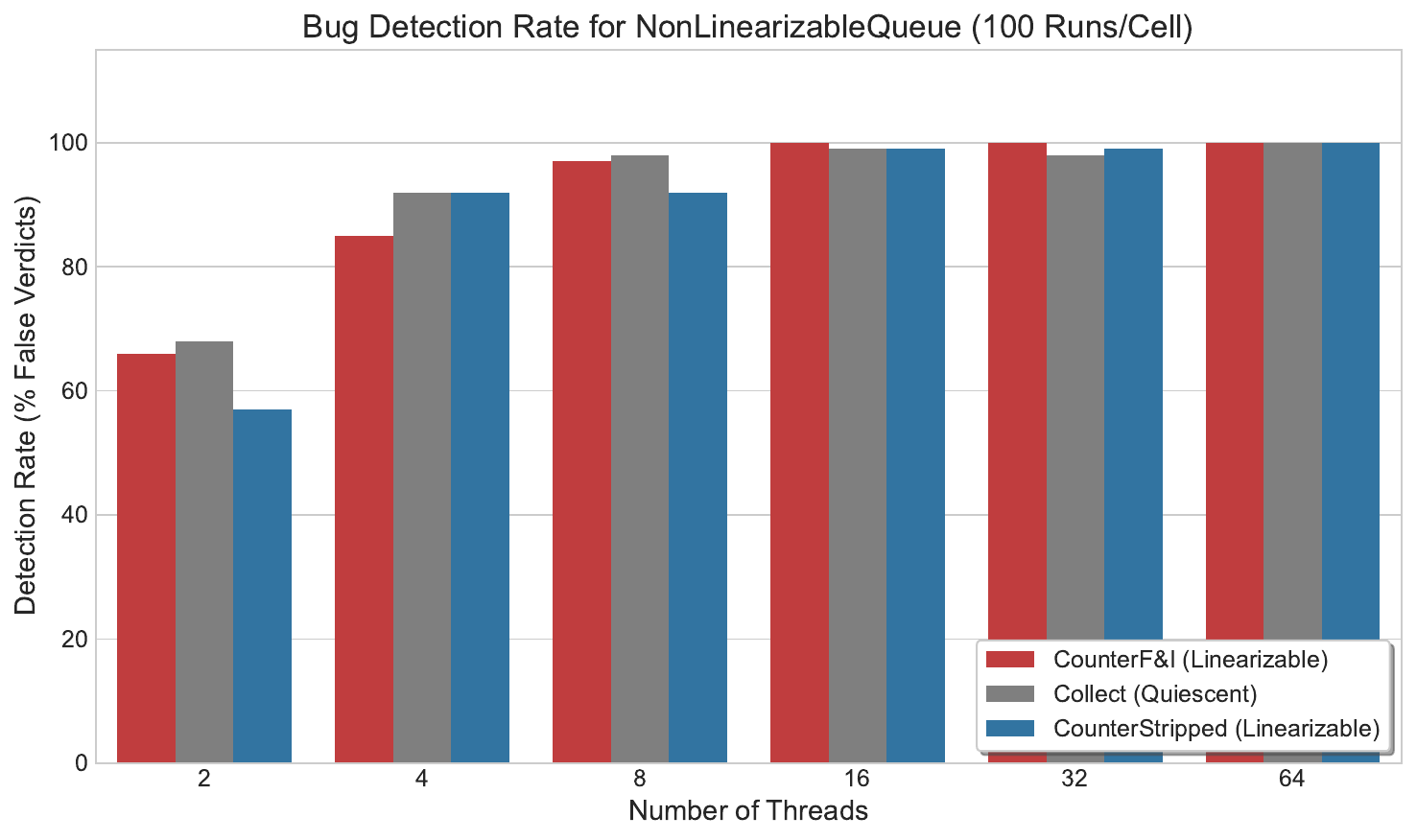}
  \caption{Bug detection rate on \texttt{NonLinearizableQueue}
           (100 runs per cell). \textsc{Striped} converges to the
           same detection rate as \textsc{FAInc} at 16 threads and
           above.}
  \label{fig:detection}
\end{figure}

Figure~\ref{fig:detection} shows the detection rate on a
\texttt{NonLinearizableQueue} across thread counts.
\textsc{FAInc} reaches 100\% detection consistently from low thread
counts.
\textsc{Striped} starts slightly below \textsc{FAInc} at 2 threads
but converges to 100\% at 16 threads and remains there.
The Collect Monitor lags behind both counters at low thread counts.

The minor gap between \textsc{Striped} and \textsc{FAInc} at low
concurrency is a direct consequence of the non-atomic
\texttt{sum()}: the traversal takes physical time, slightly widening
the observed interval of each operation and granting the verifier
a marginally larger concurrent window in which violations can be
hidden.
This is the same mechanism that limits Collect at low concurrency,
though Collect's effect is more pronounced because its widening is
proportional to the number of registers read rather than counter
cells.
As thread contention grows, the widening becomes negligible relative
to the actual concurrency in the execution, and \textsc{Striped}
converges to \textsc{FAInc}.

These results confirm that, in practice, the three monitors are
equivalent for the purpose of RV: all are RV-complete, and all achieve equivalent
detection rates at realistic thread counts.

\subsection{RQ2: Scalability}
\label{sec:rq2}

\begin{figure*}[t]
  \centering
  \includegraphics[width=\textwidth]{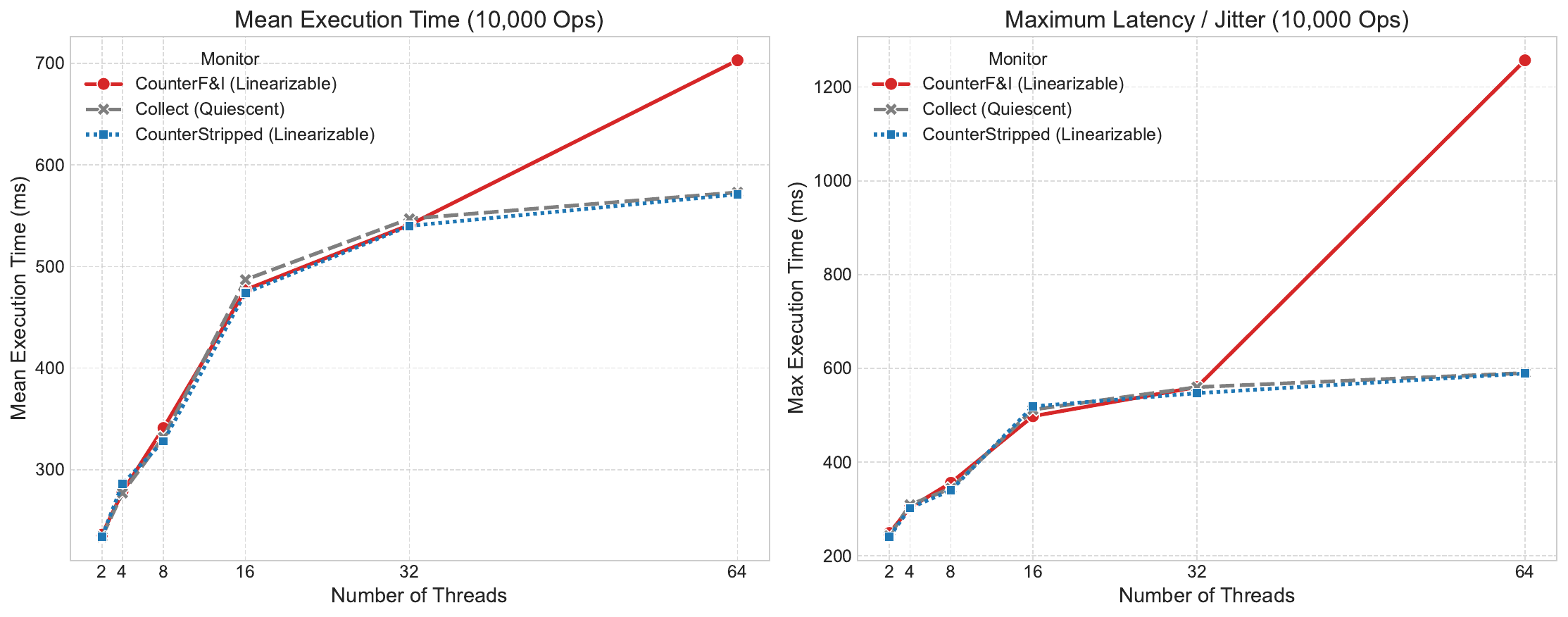}
  \caption{Mean execution time (left) and maximum latency (right)
           at 10,000 operations per thread.
           \textsc{FAInc} collapses at 64 threads due to cache-line
           contention on the centralized counter.
           \textsc{Striped} and Collect remain stable across all
           thread counts.}
  \label{fig:overhead}
\end{figure*}

Figure~\ref{fig:overhead} shows mean execution time and maximum
latency at 10,000 operations per thread.
Up to 16 threads, all three monitors perform similarly: modern cache
coherency protocols mask the contention on the centralized
\textsc{FAInc} counter.
Beyond 16 threads the difference becomes visible, and at 64 threads
\textsc{FAInc} suffers a severe degradation: mean time reaches
700\,ms and maximum latency exceeds 1,200\,ms, evidencing
cache-line bouncing on the centralized atomic counter under maximum
hardware concurrency.

\textsc{Striped} and Collect remain stable across all thread counts,
with mean times near 575\,ms and maximum latency near 580\,ms at
64 threads.
\textsc{Striped} is marginally faster than Collect at low thread
counts, where the per-register traversal of Collect adds a small
but measurable overhead; at high thread counts the two are
effectively indistinguishable.

These results answer RQ2 affirmatively: \textsc{Striped} eliminates
the scalability bottleneck of \textsc{FAInc} while matching Collect
in throughput.
Combined with the verdict results of RQ1, the conclusion is that
\textsc{Striped} is a strictly better choice than \textsc{FAInc}
as a linearizable implementation of the Causal Monitor: it provides
the same $\mathbf{COP}$ guarantees, the same detection behavior at
realistic concurrency, and scales to high thread counts without
contention collapse.

The comparison with Collect, however, is not one of dominance.
Collect was shown in~\cite{RC24, RPC26} to outperform \textsc{FAInc}
precisely because it avoids centralized synchronization.
\textsc{Striped} now shows that a \emph{linearizable} counter can
match that performance.
The theoretical gap between linearizability and quiescent consistency
--- which Theorems~\ref{thm:collect-not-lin}
and~\ref{thm:collect-qc} make precise --- does not translate into
a performance gap at the scale tested.
Whether the two diverge under more adversarial workloads is left for
future work.

\section{Related Work}
\label{sec:cmp}

The problem of monitoring linearizability at runtime was studied
in~\cite{CR23, CR26_JACM}, where the impossibility of simultaneously
achieving soundness and completeness at the observable boundary
was identified as a structural limitation.
Those works introduced a monitor based on atomic snapshots.
A follow-up~\cite{RC24} introduced the Collect Monitor and the
\textsc{FAInc} Counter Monitor as more practical tools, presenting
both under internal placement and treating them as equivalent
implementations.
In~\cite{RPC26}, Collect was shown empirically to outperform
\textsc{FAInc} in scalability.
The present paper formalizes that limitation as $\mathbf{COP}$,
proves the placement dichotomy that explains it, establishes the
consistency properties of both monitors with respect to the Causal
Monitor sequential specification --- showing that the two are
\emph{not} equivalent as concurrent objects, despite solving the
same problem --- and introduces \textsc{Striped} as a linearizable
monitor that matches the scalability of Collect while providing
stronger internal consistency guarantees.

El-Hoyakem and Falcone~\cite{EF18} proposed a monitor that instruments only
one boundary of each operation.
As noted in Section~\ref{sec:eint}, such monitors are neither sound
nor complete: they do not solve $\mathbf{COP}$.
Lowe~\cite{Lowe17} studied linearizability checking offline, where
the full history is available after execution; that setting does not
face the observability constraints that define $\mathbf{COP}$.\\

Classical causal monitoring was designed for the message-passing
model, where the goal is to reconstruct the flow of information
across the entire execution~\cite{Lamport78, Fidge88, Mattern88,
Wuu84, RS96}.
In the terminology of this paper, those tools address a different
instance of the problem: the observable set is the entire execution
$\Eall$ and the target relation is Lamport causality $\hbL$.

Scalar clocks~\cite{Lamport78} assign a single integer to each
event and achieve completeness for $\hbL$ over $\Eall$, but not
soundness: a smaller timestamp does not imply causal precedence.
Vector clocks~\cite{Fidge88, Mattern88} achieve strong consistency
for $\hbL$ over $\Eall$: $e_1 \hbL e_2 \iff \mathit{vc}(e_1) 
\mathit{vc}(e_2)$.
Matrix clocks~\cite{Wuu84} extend this with second-order knowledge,
at $O(n^2)$ cost per event.

These tools do not solve $\mathbf{COP}$ because, as shown in
Claim~\ref{claim:incomparable}, $\hbL[\Eobs]$ and $<_H$ are
incomparable: deploying vector clocks at the observable boundary
neither captures all real-time precedences nor avoids reporting
spurious ones.
The Causal Monitor operates on a fundamentally different instance,
tracking real-time order $<_H$ from $\Eobs$, where strong
consistency is unattainable and the best achievable guarantee is
determined by placement.

Table~\ref{tab:cmp} summarizes the comparison.

\begin{table}[ht]
\centering
\caption{Causal monitoring tools as instances of the observability
problem. Completeness and soundness are with respect to the stated
target relation over the stated event set.}
\label{tab:cmp}
\renewcommand{\arraystretch}{1.3}
\begin{tabular}{@{}lllllc@{}}
\toprule
\textbf{Tool}
  & \textbf{Events}
  & \textbf{Target}
  & \textbf{Compl.}
  & \textbf{Sound}
  & \textbf{Cost} \\
\midrule
Scalar clock~\cite{Lamport78}
  & $\Eall$ & $\hbL$ & \checkmark & $\times$ & $O(1)$ \\
Vector clock~\cite{Fidge88,Mattern88}
  & $\Eall$ & $\hbL$ & \checkmark & \checkmark & $O(n)$ \\
Matrix clock~\cite{Wuu84}
  & $\Eall$ & $\hbL$ & \checkmark & \checkmark & $O(n^2)$ \\
\midrule
Causal Monitor (this work)
  & $\Eobs$ & $<_H$ & \multicolumn{2}{c}{mutually exclusive} & $O(1)$/$O(N)$ \\
\bottomrule
\end{tabular}
\end{table}

The COP invariance result
shows that this gap in internal consistency is not observable at
the level of the problem the monitors solve: both achieve the same
soundness guarantee under external placement.
To our knowledge, this is the first result showing that a
quiescently consistent object and a linearizable object are
provably equivalent with respect to a well-defined external
problem.

\section{Discussion and Further Work}
\label{sec:discussion}

The Causal Monitor is a tool for observing real-time precedence
in concurrent executions, and $\mathbf{COP}$ is the problem of
designing such a monitor under the constraints imposed by the
observable boundary.
This paper motivated $\mathbf{COP}$ through runtime verification
of linearizability, but the framework applies wherever a system
needs to track operation order from an external vantage point.\\

Two domains where $\mathbf{COP}$ arises naturally are causally
consistent replication and distributed debugging.
In causally consistent distributed storage systems --- such as
those underlying social networks, collaborative editing tools,
and geo-replicated databases --- replicas must agree on the
causal order of updates: if one update causally depends on
another, all replicas must observe them in that order~\cite{Lamport78}.
Systems such as \textit{GentleRain} and \textit{Cure}~\cite{GentleRain, Cure}
enforce this by tracking causal timestamps, which is precisely
an instance of $\mathbf{COP}$ where the observable set is the
boundary of client transactions and the target relation is
real-time transaction order.
The placement dichotomy predicts that any monitor tracking
this order faces the same soundness/completeness trade-off
identified here.

In distributed debugging and tracing, tools such as
Dapper~\cite{Dapper10} and X-Trace~\cite{XTrace07} record
causal chains across service boundaries to reconstruct
execution order for post-hoc analysis.
These tools face exactly the observable boundary problem: they
can only instrument the entry and exit points of service calls,
not the internal steps.
$\mathbf{COP}$ formalizes the guarantees such tools can provide
and explains why some tracing systems report spurious
dependencies while others miss genuine ones.\\

It would be interesting to explore instances of the observability
problem with different target relations.
If the target relation is Lamport causality $\hbL$ rather than
real-time order $<_H$, and the observable set is the full
execution $\Eall$, then strong consistency is achievable ---
vector clocks solve it exactly.
The placement dichotomy still holds, but both placements achieve
strong consistency because $\hbL$ and $\hbL[\Eall]$ coincide.
The interesting case is the boundary: when the observable set
shrinks to $\Eobs$, the Claim of Section~\ref{sec:cop} shows
that $\hbL[\Eobs]$ and $<_H$ diverge, and strong consistency
becomes unachievable.
Characterizing exactly which pairs of observable sets and target
relations admit strong consistency is an open question.\\

The Collect Monitor is the only quiescently consistent
implementation studied here.
A natural question is whether it is optimal among quiescently
consistent monitors for $\mathbf{COP}$: does every sound
quiescently consistent monitor for $\mathbf{COP}$ require at
least the state and communication complexity of Collect?
Conversely, is there a quiescently consistent monitor that is
sound for $\mathbf{COP}$ and strictly cheaper than Collect?
We leave this as future work.

\bibliography{biblio}

\appendix


\section{Placement Cases}
\label{app:placement}

We systematically analyze all possible placements of interface
events relative to true operation boundaries.
In each case, the monitor may invoke objects of any type,
including objects with infinite consensus number.
The argument in each case is the same: what matters is not the
power of the object invoked but the position of its invocation
relative to the true boundary.

Recall that under an ideal log, every object invocation takes
effect at a single linearization point $\lp(\cdot)$, and the
causal order observable by the monitor agrees with the real-time
order of these points:
$e_1 \hbL e_2 \iff \lp(e_1) \prec \lp(e_2)$.

\subsection*{Unilateral Placement}

\begin{lemma}[Unilateral placement achieves neither soundness nor
completeness]
\label{lem:unilateral}
Let $\mathcal{M}$ be a monitor that invokes objects only before
both boundaries (only before $\inv(\mathit{op})$ and
$\res(\mathit{op})$) or only after both boundaries.
Then $\mathcal{M}$ is neither sound nor complete for
$\mathbf{COP}$.
\end{lemma}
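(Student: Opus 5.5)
We treat the two unilateral configurations separately. In each one we exhibit two concrete executions of two operations $\mathit{op}_A$ (at $p_i$) and $\mathit{op}_B$ (at $p_j$, $i\neq j$). In the first execution, completeness fails: $\mathit{op}_A <_H \mathit{op}_B$ holds but the monitor answers $\precedes(A,B)=\mathtt{false}$. In the second, soundness fails: $\mathit{op}_A \not<_H \mathit{op}_B$ but the monitor answers $\mathtt{true}$. Under the ideal log (Definition~\ref{def:ideal-log}), the monitor's answer is determined by the order of linearization points. So $\precedes(A,B)=\mathtt{true}$ iff $\lp(\eres(\mathit{op}_A)) \prec \lp(\einv(\mathit{op}_B))$, and it suffices to schedule linearization points relative to true boundaries. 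Asynchrony lets the adversary choose any interleaving consistent with the per-process constraints imposed by the placement.

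\emph{Case ``only before''.} Here $\lp(\einv(\mathit{op}))\prec\inv(\mathit{op})$ and $\lp(\eres(\mathit{op}))\prec\res(\mathit{op})$, with program order forcing $\lp(\einv)\prec\lp(\eres)$.

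For non-completeness, schedule $p_j$ to execute $\einv(\mathit{op}_B)$ early, before $p_i$ emits $\eres(\mathit{op}_A)$. Then $p_i$ emits $\eres(\mathit{op}_A)$ and returns $\res(\mathit{op}_A)$. Only afterwards does $p_j$ perform $\inv(\mathit{op}_B)$. The resulting order is
\[
\lp(\einv(\mathit{op}_B))\prec\lp(\eres(\mathit{op}_A))\prec\res(\mathit{op}_A)\prec\inv(\mathit{op}_B),
\]
so $\mathit{op}_A<_H\mathit{op}_B$ but the answer is $\mathtt{false}$.

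For non-soundness, let $p_i$ emit $\eres(\mathit{op}_A)$ and then stall before returning. Meanwhile $p_j$ emits $\einv(\mathit{op}_B)$ and invokes $\mathit{op}_B$, and only then does $p_i$ return. This gives
\[
\lp(\eres(\mathit{op}_A))\prec\lp(\einv(\mathit{op}_B))\prec\inv(\mathit{op}_B)\prec\res(\mathit{op}_A),
\]
so the answer is $\mathtt{true}$ while the operations overlap.

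\emph{Case ``only after''.} This is the mirror image, with $\inv(\mathit{op})\prec\lp(\einv(\mathit{op}))$ and $\res(\mathit{op})\prec\lp(\eres(\mathit{op}))$.

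For non-completeness, take $\res(\mathit{op}_A)\prec\inv(\mathit{op}_B)$, then let $p_j$ emit $\einv(\mathit{op}_B)$ immediately while $p_i$ is delayed in emitting $\eres(\mathit{op}_A)$. The answer is $\mathtt{false}$ despite the genuine precedence.

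For non-soundness, take $\inv(\mathit{op}_B)\prec\res(\mathit{op}_A)$, so the operations overlap. Then let $p_i$ emit $\eres(\mathit{op}_A)$ immediately, while $p_j$ delays $\einv(\mathit{op}_B)$ until after it. The answer is $\mathtt{true}$ for overlapping operations.

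The main thing to check is that each schedule is legal. Each process's events must respect program order, namely $\inv\prec\res$ together with the placement constraints. Any cross-process interleaving is admissible by asynchrony, and the ideal log gives the monitor no information beyond the $\lp$ order, whatever objects it invokes. I expect this legality check to be routine, since in every schedule the only cross-process ordering is the one chosen by the adversary. One subtlety needs care: the wording ``before both boundaries'' should be read as each interface event preceding its own boundary. This is what I take as the definition, and it matches the constraints used above.
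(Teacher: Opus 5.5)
Your proposal is correct and follows essentially the same route as the paper: the same split into the ``only before'' and ``only after'' cases, each refuted by one schedule with $\lp(\einv{\mathit{op}_B}) \prec \lp(\eres{\mathit{op}_A})$ despite $\mathit{op}_A <_H \mathit{op}_B$ (non-completeness) and one with $\lp(\eres{\mathit{op}_A}) \prec \lp(\einv{\mathit{op}_B})$ for overlapping operations (non-soundness). Your explicit total orders and legality check make precise what the paper states informally (``may be'', ``happens to fire'').
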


\begin{proof}
We prove both cases.

\emph{Case 1: only before both boundaries.}
The monitor invokes objects before $\inv(\mathit{op})$ and before
$\res(\mathit{op})$, producing interface events
$\einv(\mathit{op})$ and $\eres(\mathit{op})$ with
$\lp(\einv(\mathit{op})) \prec \inv(\mathit{op})$ and
$\lp(\eres(\mathit{op})) \prec \res(\mathit{op})$.

\emph{Not complete.}
Consider $\mathit{op}_A <_H \mathit{op}_B$, so
$\res(\mathit{op}_A) \prec \inv(\mathit{op}_B)$.
The monitor's event for the response of $\mathit{op}_A$ fires
before $\res(\mathit{op}_A)$, and the monitor's event for the
invocation of $\mathit{op}_B$ fires before $\inv(\mathit{op}_B)$.
No causal chain is established between $\eres(\mathit{op}_A)$ and
$\einv(\mathit{op}_B)$: $\eres(\mathit{op}_A)$ fires before
$\res(\mathit{op}_A)$ and $\einv(\mathit{op}_B)$ fires before
$\inv(\mathit{op}_B)$, but $\eres(\mathit{op}_A)$ may fire after
$\einv(\mathit{op}_B)$ even when $\mathit{op}_A <_H \mathit{op}_B$.
Concretely: if $p_i$ delays firing $\eres(\mathit{op}_A)$ until
just before $\res(\mathit{op}_A)$, and $p_j$ fires
$\einv(\mathit{op}_B)$ early, the real-time order of the interface
events may be $\einv(\mathit{op}_B) \prec \eres(\mathit{op}_A)$
despite $\mathit{op}_A <_H \mathit{op}_B$.
The monitor misses a genuine precedence.

\emph{Not sound.}
Consider $\mathit{op}_A \parallel \mathit{op}_B$.
The monitor fires $\eres(\mathit{op}_A)$ before $\res(\mathit{op}_A)$
and $\einv(\mathit{op}_B)$ before $\inv(\mathit{op}_B)$.
If $\eres(\mathit{op}_A)$ happens to fire before
$\einv(\mathit{op}_B)$ in real time, the monitor reports
$\precedes(A,B) = \mathtt{true}$, but $\mathit{op}_A \parallel
\mathit{op}_B$ means $\mathit{op}_A <_H \mathit{op}_B$ is false.
A spurious precedence is reported.

\emph{Case 2: only after both boundaries.}
The monitor invokes objects after $\inv(\mathit{op})$ and after
$\res(\mathit{op})$, producing interface events with
$\inv(\mathit{op}) \prec \lp(\einv(\mathit{op}))$ and
$\res(\mathit{op}) \prec \lp(\eres(\mathit{op}))$.

\emph{Not complete.}
Consider $\mathit{op}_A <_H \mathit{op}_B$, so
$\res(\mathit{op}_A) \prec \inv(\mathit{op}_B)$.
The monitor fires $\eres(\mathit{op}_A)$ after $\res(\mathit{op}_A)$
and $\einv(\mathit{op}_B)$ after $\inv(\mathit{op}_B)$.
If $p_j$ fires $\einv(\mathit{op}_B)$ immediately after
$\inv(\mathit{op}_B)$ while $p_i$ delays $\eres(\mathit{op}_A)$,
the real-time order may be $\einv(\mathit{op}_B) \prec
\eres(\mathit{op}_A)$ despite $\mathit{op}_A <_H \mathit{op}_B$.
The monitor misses a genuine precedence.

\emph{Not sound.}
Consider $\mathit{op}_A \parallel \mathit{op}_B$.
The monitor fires $\eres(\mathit{op}_A)$ after $\res(\mathit{op}_A)$
and $\einv(\mathit{op}_B)$ after $\inv(\mathit{op}_B)$.
If $\eres(\mathit{op}_A)$ fires before $\einv(\mathit{op}_B)$,
the monitor reports $\precedes(A,B) = \mathtt{true}$, but
$\mathit{op}_A \parallel \mathit{op}_B$ so no real precedence
exists.
A spurious precedence is reported.
\end{proof}

\subsection*{External Placement: Sound but Not Complete}

\begin{lemma}[External placement is not complete]
\label{lem:external-not-complete}
A monitor with external placement is sound for $\mathbf{COP}$
(Lemma~\ref{lem:placement}) but not complete: there exist
executions where $\mathit{op}_A <_H \mathit{op}_B$ yet
$\precedes(A,B) = \mathtt{false}$.
\end{lemma}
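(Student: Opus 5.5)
The soundness half is already given by Lemma~\ref{lem:placement}; it also follows by the chaining argument used for Theorem~\ref{thm:counter-sound}. So the work is to exhibit a concrete execution in which $\mathit{op}_A <_H \mathit{op}_B$ holds but $\precedes(A,B)=\mathtt{false}$. By Definition~\ref{def:monitor}, $\precedes(A,B)=\mathtt{true}$ iff $\eres(\mathit{op}_A) \hbL \einv(\mathit{op}_B)$. Under an ideal log (Definition~\ref{def:ideal-log}) this holds iff $\lp(\eres(\mathit{op}_A)) \prec \lp(\einv(\mathit{op}_B))$. It therefore suffices to build an execution with
\[
\lp(\einv(\mathit{op}_B)) \prec \res(\mathit{op}_A) \prec \inv(\mathit{op}_B) \prec \lp(\eres(\mathit{op}_A)),
\]
which is consistent with external placement for both operations.

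The construction uses two processes $p_i \neq p_j$, with $\mathit{op}_A$ at $p_i$ and $\mathit{op}_B$ at $p_j$, and an adversarial asynchronous schedule. I would proceed in four steps:
\begin{enumerate}
  \item Run $p_i$ through $\einv(\mathit{op}_A)$, the invocation $\inv(\mathit{op}_A)$, the body of $\mathit{op}_A$, and the response $\res(\mathit{op}_A)$. Then suspend $p_i$ before it fires $\eres(\mathit{op}_A)$.
  \item Let $p_j$ take the step $\einv(\mathit{op}_B)$ earlier, e.g.\ before $\res(\mathit{op}_A)$. External placement only requires $\lp(\einv(\mathit{op}_B)) \prec \inv(\mathit{op}_B)$, so nothing forbids this.
  \item Let $p_j$ perform $\inv(\mathit{op}_B)$ after $\res(\mathit{op}_A)$. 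This gives $\mathit{op}_A <_H \mathit{op}_B$.
  \item Resume $p_i$, so that $\lp(\eres(\mathit{op}_A))$ occurs after $\inv(\mathit{op}_B)$.
\end{enumerate}
Asynchrony allows arbitrary finite delays between a process's consecutive steps, so this schedule is admissible.

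In the resulting execution, $\lp(\einv(\mathit{op}_B)) \prec \lp(\eres(\mathit{op}_A))$. Because the order on linearization points is a strict total order, $\lp(\eres(\mathit{op}_A)) \prec \lp(\einv(\mathit{op}_B))$ fails, and so $\eres(\mathit{op}_A) \not\hbL \einv(\mathit{op}_B)$. Hence $\precedes(A,B)=\mathtt{false}$ although $\mathit{op}_A <_H \mathit{op}_B$, which violates completeness.

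The main obstacle is making the argument independent of the objects the monitor invokes, as the paper insists. The monitor might try to use a strong object (e.g.\ consensus) to force $p_j$'s $\einv(\mathit{op}_B)$ to wait for or learn about $\eres(\mathit{op}_A)$. I would handle this by noting two facts. First, under external placement $\eres(\mathit{op}_A)$ has not occurred when $p_j$ fires $\einv(\mathit{op}_B)$, so no information about it can flow to $p_j$. Second, monitor events are wait-free shared-memory invocations, so $p_j$ cannot be made to block on $p_i$ without violating non-blocking progress. Finally, the ideal-log biconditional $\hbL \iff \prec$ on linearization points closes any loophole: the log order is forced by the real-time order the adversary chose, whatever objects are used.
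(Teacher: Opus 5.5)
Your proposal is correct and takes essentially the paper's approach. Under external placement nothing forces $\lp(\eres(\mathit{op}_A)) \prec \lp(\einv(\mathit{op}_B))$, so an adversary that fires $\einv(\mathit{op}_B)$ early and delays $\eres(\mathit{op}_A)$ makes the ideal-log biconditional yield $\precedes(A,B)=\mathtt{false}$ despite $\mathit{op}_A <_H \mathit{op}_B$; your four-step schedule just makes concrete the paper's informal ``fire $\einv(\mathit{op}_B)$ very early, fire $\eres(\mathit{op}_A)$ late'' scenario.
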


\begin{proof}
External placement gives
$\lp(\einv(\mathit{op})) \prec \inv(\mathit{op})$ and
$\res(\mathit{op}) \prec \lp(\eres(\mathit{op}))$.

Consider two operations $\mathit{op}_A$ on $p_i$ and
$\mathit{op}_B$ on $p_j$ with $\mathit{op}_A <_H \mathit{op}_B$:
$\res(\mathit{op}_A) \prec \inv(\mathit{op}_B)$.
The monitor fires $\eres(\mathit{op}_A)$ after $\res(\mathit{op}_A)$
and $\einv(\mathit{op}_B)$ before $\inv(\mathit{op}_B)$.
There is no constraint forcing
$\lp(\eres(\mathit{op}_A)) \prec \lp(\einv(\mathit{op}_B))$:
if $p_j$ fires $\einv(\mathit{op}_B)$ very early (long before
$\inv(\mathit{op}_B)$) and $p_i$ fires $\eres(\mathit{op}_A)$
late (just after $\res(\mathit{op}_A)$), then
$\lp(\einv(\mathit{op}_B)) \prec \lp(\eres(\mathit{op}_A))$
even though $\mathit{op}_A <_H \mathit{op}_B$.
The monitor observes $\einv(\mathit{op}_B)$ before
$\eres(\mathit{op}_A)$ and reports
$\precedes(A,B) = \mathtt{false}$, missing a genuine precedence.
\end{proof}

\subsection*{Internal Placement: Complete but Not Sound}

\begin{lemma}[Internal placement is not sound]
\label{lem:internal-not-sound}
A monitor with internal placement is complete for $\mathbf{COP}$
(Lemma~\ref{lem:placement}) but not sound: there exist executions
where $\mathit{op}_A \parallel \mathit{op}_B$ yet
$\precedes(A,B) = \mathtt{true}$.
\end{lemma}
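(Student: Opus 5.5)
The proof has two parts. First, I recall why internal placement is complete, so the lemma's parenthetical reference to Lemma~\ref{lem:placement} is self-contained. Second, I construct a concrete execution with two concurrent operations for which the monitor nonetheless reports $\precedes(A,B)=\mathtt{true}$.

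For completeness, suppose $\mathit{op}_A <_H \mathit{op}_B$, i.e.\ $\res(\mathit{op}_A) \prec \inv(\mathit{op}_B)$. Internal placement gives
\[
\lp(\eres(\mathit{op}_A)) \prec \res(\mathit{op}_A) \prec \inv(\mathit{op}_B) \prec \lp(\einv(\mathit{op}_B)).
\]
By the ideal log property (Definition~\ref{def:ideal-log}), $\lp(\eres(\mathit{op}_A)) \prec \lp(\einv(\mathit{op}_B))$ is equivalent to $\eres(\mathit{op}_A) \hbL \einv(\mathit{op}_B)$. Definition~\ref{def:monitor} then gives $\precedes(A,B)=\mathtt{true}$.

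For non-soundness, take $\mathit{op}_A$ on $p_i$ and $\mathit{op}_B$ on $p_j$ with $i\neq j$, and use asynchrony to fix the schedule
\[
\inv(\mathit{op}_A) \prec \lp(\einv(\mathit{op}_A)) \prec \lp(\eres(\mathit{op}_A)) \prec \inv(\mathit{op}_B) \prec \lp(\einv(\mathit{op}_B)) \prec \res(\mathit{op}_A) \prec \lp(\eres(\mathit{op}_B)) \prec \res(\mathit{op}_B).
\]
In words, $p_i$ emits its response interface event and then stalls before actually returning. Meanwhile $p_j$ begins $\mathit{op}_B$ and emits its invocation interface event. This schedule respects internal placement for both operations.

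We have $\inv(\mathit{op}_B) \prec \res(\mathit{op}_A)$, so $\neg(\mathit{op}_A <_H \mathit{op}_B)$. We also have $\inv(\mathit{op}_A) \prec \res(\mathit{op}_B)$, so $\neg(\mathit{op}_B <_H \mathit{op}_A)$. Hence $\mathit{op}_A \parallel \mathit{op}_B$. Yet $\lp(\eres(\mathit{op}_A)) \prec \lp(\einv(\mathit{op}_B))$, so by the ideal log $\eres(\mathit{op}_A) \hbL \einv(\mathit{op}_B)$, and the monitor reports $\precedes(A,B)=\mathtt{true}$. This is a spurious precedence.

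The main point to justify is that this schedule is admissible for every monitor, not just a particular one. The gap between $\lp(\eres(\mathit{op}_A))$ and $\res(\mathit{op}_A)$ is a window containing no shared-memory step of $p_i$. I take this from the argument sketched for Theorem~\ref{thm:impossibility}: interface events and true boundaries never coincide, which leaves an open window around each boundary. An asynchronous adversary can stretch this window arbitrarily and fit all of $p_j$'s steps from $\inv(\mathit{op}_B)$ through $\lp(\einv(\mathit{op}_B))$ inside it. Since the monitor may invoke objects of any type but only at these positions, nothing the monitor does can forbid this interleaving.
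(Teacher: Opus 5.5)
Your proof is correct and follows the paper's route. Internal placement plus asynchrony lets $p_i$ emit $\eres(\mathit{op}_A)$ early and $p_j$ emit $\einv(\mathit{op}_B)$ late, so the ideal log orders the interface events while the operations overlap; your explicit schedule makes the paper's ``it is possible'' concrete, and you also write out the completeness chain $\lp(\eres(\mathit{op}_A)) \prec \res(\mathit{op}_A) \prec \inv(\mathit{op}_B) \prec \lp(\einv(\mathit{op}_B))$ that the paper only cites from Lemma~\ref{lem:placement} (your claim that the window before $\res(\mathit{op}_A)$ contains no step of $p_i$ need not hold, since the target operation may still take steps there, but asynchrony alone lets you stall $p_i$, so nothing depends on it).
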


\begin{proof}
Internal placement gives
$\inv(\mathit{op}) \prec \lp(\einv(\mathit{op}))$ and
$\lp(\eres(\mathit{op})) \prec \res(\mathit{op})$.

Consider two operations $\mathit{op}_A$ on $p_i$ and
$\mathit{op}_B$ on $p_j$ with $\mathit{op}_A \parallel
\mathit{op}_B$: their execution intervals overlap.
The monitor fires $\eres(\mathit{op}_A)$ before
$\res(\mathit{op}_A)$ and $\einv(\mathit{op}_B)$ after
$\inv(\mathit{op}_B)$.
Since the operations overlap and the processes are asynchronous,
it is possible that $\lp(\eres(\mathit{op}_A)) \prec
\lp(\einv(\mathit{op}_B))$ even though the operations are
concurrent: $p_i$ fires $\eres(\mathit{op}_A)$ early in its
interval while $p_j$ fires $\einv(\mathit{op}_B)$ late in its
interval.
The monitor observes $\eres(\mathit{op}_A)$ before
$\einv(\mathit{op}_B)$ and reports
$\precedes(A,B) = \mathtt{true}$, but
$\mathit{op}_A \parallel \mathit{op}_B$ so no real precedence
exists.
A spurious precedence is reported.
\end{proof}

\section{Proof of Theorem~\ref{thm:impossibility}}
\label{app:impossibility}
\begin{proof}
We prove that no monitor $\mathcal{M}$ can be strongly consistent
for $\mathbf{COP}$, even under an ideal log
(Definition~\ref{def:ideal-log}) and even if $\mathcal{M}$ uses
shared-memory objects of arbitrary power.

\subsection*{Setup.}
Fix any monitor $\mathcal{M}$ and any two operations
$\mathit{op}_A$ on process $p_i$ and $\mathit{op}_B$ on process
$p_j$, where $p_i \ne p_j$.
By Definition~\ref{def:monitor} and the WLOG argument of
Section~\ref{sec:eint}, we may assume without loss of generality
that $\mathcal{M}$ emits exactly one interface event per boundary:
$\eres(\mathit{op}_A)$ associated with the response of
$\mathit{op}_A$, and $\einv(\mathit{op}_B)$ associated with the
invocation of $\mathit{op}_B$.
Under an ideal log, each interface event takes effect at a single
linearization point: $\lp(\eres(\mathit{op}_A))$ and
$\lp(\einv(\mathit{op}_B))$.\\

The true boundary $\res(\mathit{op}_A)$ is a transfer of control
between $p_i$ and its caller.
The interface event $\eres(\mathit{op}_A)$ is a shared-memory
step.
These are events of distinct categories and therefore have
distinct points in real time: $\res(\mathit{op}_A) \ne
\lp(\eres(\mathit{op}_A))$.

Let $a^-$ be the linearization point of the last interface event
of $\mathit{op}_A$ that precedes $\res(\mathit{op}_A)$ in program
order, and let $a^+ = \lp(\eres(\mathit{op}_A))$.
The true boundary $\res(\mathit{op}_A)$ lies strictly inside the
open interval $(a^-, a^+)$: it cannot equal $a^-$ because $a^-$
is a shared-memory step that precedes the response in program
order, and it cannot equal $a^+$ for the same reason.
This window $(a^-, a^+)$ is unavoidable: the only way to close it
would be to make $\res(\mathit{op}_A)$ and $\eres(\mathit{op}_A)$
the same event, which is impossible since one is a control
transfer and the other is a shared-memory write.

Define the window $(b^-, b^+)$ around $\inv(\mathit{op}_B)$
symmetrically, where $b^- = \lp(\einv(\mathit{op}_B))$ and $b^+$
is the linearization point of the first interface event of
$\mathit{op}_B$ that follows $\inv(\mathit{op}_B)$ in program
order.
By the same argument, $\inv(\mathit{op}_B)$ lies strictly inside
$(b^-, b^+)$ and this window cannot be closed.

The following diagram illustrates the two windows and their
overlap:

\[
\underbrace{%
  \overbrace{a^- \;\cdots\; \res(\mathit{op}_A) \;\cdots\; a^+}^{%
    \text{window } (a^-, a^+)}
  \qquad
  \overbrace{b^- \;\cdots\; \inv(\mathit{op}_B) \;\cdots\; b^+}^{%
    \text{window } (b^-, b^+)}
}_{\text{overlap } W \ne \emptyset \text{ by adversarial scheduling}}
\]

Because $p_i$ and $p_j$ are asynchronous --- no bound on relative
speeds --- an adversary can schedule their steps so that the two
windows overlap:
\[
  W = \bigl(\max(a^-, b^-),\;\min(a^+, b^+)\bigr) \ne \emptyset.
\]
Concretely: the adversary slows $p_j$ until $\einv(\mathit{op}_B)$
fires at the same time that $p_i$ is executing inside $(a^-, a^+)$,
and symmetrically delays $p_i$ so that $\eres(\mathit{op}_A)$ fires
while $p_j$ is inside $(b^-, b^+)$.
Since the two processes share no synchronization that could prevent
this interleaving, such a schedule always exists.

\subsection*{The indistinguishable executions.}
With $W \ne \emptyset$, we place $\res(\mathit{op}_A)$ and
$\inv(\mathit{op}_B)$ at two different positions within $W$:

\medskip
\begin{center}
\begin{tabular}{lll}
\toprule
& \textbf{Position of true boundaries} & \textbf{Relation} \\
\midrule
$E_1$ & $\res(\mathit{op}_A) \prec \inv(\mathit{op}_B)$ in $W$
      & $\mathit{op}_A <_H \mathit{op}_B$ \\
$E_2$ & $\inv(\mathit{op}_B) \prec \res(\mathit{op}_A)$ in $W$
      & $\mathit{op}_A \parallel \mathit{op}_B$ \\
\bottomrule
\end{tabular}
\end{center}
\medskip

Both executions are identical in every shared-memory step: the
interface events $\eres(\mathit{op}_A)$ and $\einv(\mathit{op}_B)$
fire at the same times $a^+$ and $b^-$ respectively, and all other
monitor steps are unchanged.
Since $\res(\mathit{op}_A)$ and $\inv(\mathit{op}_B)$ are not
shared-memory steps, moving them within $W$ does not alter any
linearization point, any register value, or any causal chain
observable by $\mathcal{M}$.
Under an ideal log, the log of $\mathcal{M}$ is therefore
identical in $E_1$ and $E_2$.

$\mathcal{M}$ must return a single answer for $\precedes(A,B)$
based on its log, and that answer is the same in $E_1$ and $E_2$.
But strong consistency requires:
\begin{itemize}
  \item In $E_1$: $\mathit{op}_A <_H \mathit{op}_B$, so
        completeness requires $\precedes(A,B) = \mathtt{true}$.
  \item In $E_2$: $\mathit{op}_A \parallel \mathit{op}_B$, so
        soundness requires $\precedes(A,B) = \mathtt{false}$.
\end{itemize}
No single answer satisfies both.
Therefore $\mathcal{M}$ cannot be strongly consistent for
$\mathbf{COP}$.

One might hope that a more powerful logging mechanism ---
one that somehow resolves the ambiguity inside $W$ --- could
escape the impossibility.
The ideal log already grants the strongest possible logging
assumption: perfect knowledge of the real-time order among all
interface events.
The impossibility does not stem from imprecision in the log.
It stems from the fact that $\res(\mathit{op}_A)$ and
$\inv(\mathit{op}_B)$ are not shared-memory steps and therefore
cannot appear in any log, no matter how ideal.
No logging mechanism, however powerful, can observe events that
are not shared-memory operations.
\end{proof}

\section{Linearizability of the \textsc{Striped} Counter}
\label{app:striped}

\subsection*{Implementation}

The \textsc{Striped} counter is implemented as an array of $N$
per-process atomic cells $C[1\dots N]$, where process $p_i$ only
increments $C[i]$.
The \textsc{Stamp} function for process $p_i$ executes:
\begin{enumerate}
  \item $C[i].\texttt{incrementAndGet}()$
        \hfill (atomic write to own cell)
  \item \textbf{return} $\sum_{j=1}^{N} C[j].\texttt{get}()$
        \hfill (non-atomic traversal of all cells)
\end{enumerate}
In Java, this is implemented using an \texttt{AtomicLongArray} with
memory padding between cells to prevent false sharing on cache
lines~\cite{lea-longadder}.
Each cell occupies a full cache line (typically 64 bytes), so
increments by different processes do not invalidate each other's
cache entries.
This padding is a hardware-level optimization and does not affect
the logical behavior of the counter.

The key invariant is that the increment always precedes the sum
within the same \textsc{Stamp} call.
This ensures that the value returned by \texttt{sum()} is at least
as large as the contribution of the calling process, and that the
global sum is strictly monotone across non-overlapping calls.

\subsection*{Proof of linearizability}

We prove that \textsc{Striped} satisfies
Proposition~\ref{prop:counter-lin}.

\begin{proof}
We must show that every execution of the \textsc{Striped} Counter
Monitor admits a linearization: a legal sequential history in which
each operation's linearization point $\lp(\cdot)$ falls within its
execution interval.\\

Consider any \textsc{Stamp} call by process $p_i$ that returns
value $t$.
The call first increments $C[i]$, then reads the sum
$\sum_{j=1}^{N} C[j]$.
Because each cell $C[j]$ is monotonically non-decreasing and the
increment of $C[i]$ occurs before the sum traversal within the
same call, the sum at the start of the traversal is at least $t$
(since $C[i]$ already reflects the increment) and at most the
sum at the end of the traversal.

More precisely: let $s_{\mathit{start}}$ be the true global sum at
the moment the traversal begins and $s_{\mathit{end}}$ be the true
global sum at the moment it ends.
The returned value $t$ satisfies
$s_{\mathit{start}} \le t \le s_{\mathit{end}}$,
since the non-atomic traversal reads each cell at some point
between its start and end.
Because each cell increments by exactly one at a time, the global
sum increases in unit steps.
Therefore, there exists a physical instant $\lp$ during the
traversal at which the true global sum equals exactly $t$.
This instant $\lp$ lies within the execution interval of the
\textsc{Stamp} call and serves as its linearization point.\\

Let $s_1$ and $s_2$ be two \textsc{Stamp} calls where $s_1$
completes before $s_2$ begins.
Since every \textsc{Stamp} increments before summing, the
increment of $s_1$ is complete before $s_2$ begins.
When $s_2$ executes its sum traversal, it reads every cell after
$s_1$'s increment has taken effect, so it observes a global sum
strictly greater than $s_1$'s contribution.
In particular, $\ts(s_2) \ge \ts(s_1) + 1 > \ts(s_1)$:
non-overlapping calls receive strictly increasing timestamps.\\

Suppose two calls $s_1$ and $s_2$ return the same value $t$.
By the argument above, non-overlapping calls receive strictly
different timestamps.
Therefore $s_1$ and $s_2$ must overlap in real time.
The tie may be broken arbitrarily in the sequential history without
violating the specification, since the Causal Monitor's sequential
specification imposes no order on concurrent operations.\\

For a query $\precedes(A,B)$ returning \texttt{true}, we have
$t_A < t_B$, where $t_A$ is the timestamp of $\addres(A)$ and
$t_B$ is the timestamp of $\addinv(B)$.
By monotonicity, $t_A < t_B$ implies $\lp(\addres(A)) \prec
\lp(\addinv(B))$, so $\addres(A)$ precedes $\addinv(B)$ in the
linearization.
This matches the sequential specification, which returns
\texttt{true} exactly when $A$'s response record precedes $B$'s
invocation record in the state.
\end{proof}

\subsection*{Note on false sharing and padding}

The \texttt{AtomicLongArray} with padding is structurally similar
to Java's \texttt{LongAdder}~\cite{lea-longadder}, but with a
critical difference: \texttt{LongAdder} supports both increment
and decrement and is designed for accumulation, not for
monotone counting.
\textsc{Striped} uses only increment operations, which is what
guarantees the monotonicity argument above.
Allowing decrements would break the intermediate value argument
and invalidate the linearizability proof.
The padding prevents cache-line false sharing at the hardware
level but does not alter the logical monotonicity or the
linearizability of the algorithm in any way.

\section{Proof of Lemma~\ref{lem:partial-order}}
\label{app:partial-order}

We prove that $\precedes$ under the Collect Monitor
(Algorithm~\ref{alg:collect}) is a strict partial order, assuming
the view-containment conjunct $v_A \subseteq v_B$ is included in
the predicate.

Recall that $\precedes(\mathit{id}_A, \mathit{id}_B)$ returns
$\mathtt{true}$ iff all three of the following hold:
\begin{enumerate}[(C1)]
  \item $(R,\mathit{id}_A) \in v_B$,
  \item $(I,\mathit{id}_B) \notin v_A$,
  \item $v_A \subseteq v_B$.
\end{enumerate}

\begin{proof}[Full proof of Lemma~\ref{lem:partial-order}]

\emph{Irreflexivity: $\precedes(A,A) = \mathtt{false}$.}
The collect of $\addres(A)$ runs to completion before the write
of $(R,A)$ to $\Reg[\mathrm{proc}(A)]$ occurs --- the collect
is part of $\addres(A)$ and the write happens at the end of it.
Therefore $(R,A) \notin v_A$, which violates~(C1).
Hence $\precedes(A,A) = \mathtt{false}$.

\emph{Antisymmetry: $\precedes(A,B) \Rightarrow \neg\precedes(B,A)$.}
Suppose $\precedes(A,B) = \mathtt{true}$.
By~(C1), $(R,A) \in v_B$.
Since registers are append-only and process $\mathrm{proc}(A)$
appends $(I,A)$ strictly before $(R,A)$ on $\Reg[\mathrm{proc}(A)]$,
and $v_B$ is a prefix-closed snapshot of the registers, it follows
that $(I,A) \in v_B$ as well.
Now consider $\precedes(B,A)$: it would require $(I,A) \notin v_B$
by~(C2) with the roles of $A$ and $B$ swapped.
But we just showed $(I,A) \in v_B$, so~(C2) fails and
$\precedes(B,A) = \mathtt{false}$.

\emph{Transitivity: $\precedes(A,B) \land \precedes(B,C)
\Rightarrow \precedes(A,C)$.}
Suppose $\precedes(A,B) = \mathtt{true}$ and
$\precedes(B,C) = \mathtt{true}$.
We verify the three conjuncts of $\precedes(A,C)$.

\emph{(C1) for $\precedes(A,C)$: $(R,A) \in v_C$.}
From $\precedes(A,B)$, conjunct~(C1) gives $(R,A) \in v_B$.
From $\precedes(B,C)$, conjunct~(C3) gives $v_B \subseteq v_C$.
Therefore $(R,A) \in v_C$.

\emph{(C2) for $\precedes(A,C)$: $(I,C) \notin v_A$.}
From $\precedes(B,C)$, conjunct~(C2) gives $(I,C) \notin v_B$.
From $\precedes(A,B)$, conjunct~(C3) gives $v_A \subseteq v_B$.
Since $(I,C) \notin v_B$ and $v_A \subseteq v_B$, we have
$(I,C) \notin v_A$.

\emph{(C3) for $\precedes(A,C)$: $v_A \subseteq v_C$.}
From $\precedes(A,B)$, conjunct~(C3) gives $v_A \subseteq v_B$.
From $\precedes(B,C)$, conjunct~(C3) gives $v_B \subseteq v_C$.
By transitivity of subset inclusion, $v_A \subseteq v_C$.

All three conjuncts hold, so $\precedes(A,C) = \mathtt{true}$.
\end{proof}

\noindent
Note that view-containment~(C3) is essential for transitivity:
without it, (C1) and (C2) alone are not transitive.
Specifically, (C2) for $\precedes(A,C)$ requires
$(I,C) \notin v_A$, but knowing $(I,C) \notin v_B$ and
$v_A \subseteq v_B$ is what allows us to conclude this.
Without~(C3), $v_A \subseteq v_B$ would not be available and
the derivation would fail.

\section{Proof of Theorem~\ref{thm:collect-not-lin}}
\label{app:not-lin}

We construct a concrete execution of four operations $A, B, C, D$
on distinct processes $p_A, p_B, p_C, p_D$ whose verdicts under
Algorithm~\ref{alg:collect} admit no sequential history.

\paragraph{Write times.}
The interface events are written to shared memory at the following
physical times:
\[
w(I_A){=}1,\quad
w(I_C){=}2,\quad
w(I_B){=}3,\quad
w(I_D){=}4,\quad
w(R_C){=}5,\quad
w(R_A){=}6.
\]

\paragraph{Collect intervals.}
Each process performs its collect for $\addres$ by reading the
registers at the following times:

\begin{itemize}
  \item $A$ collects during $[2.0, 2.5)$: reads all registers
        while only $I_A$ and $I_C$ have been written.
  \item $C$ collects during $[2.5, 3.0)$: reads all registers
        while only $I_A$ and $I_C$ have been written.
  \item $B$ reads $\Reg[\mathrm{proc}(C)]$ at $t{=}4$
        (before $w(R_C){=}5$) and $\Reg[\mathrm{proc}(A)]$
        at $t{=}7$ (after $w(R_A){=}6$).
  \item $D$ reads $\Reg[\mathrm{proc}(A)]$ at $t{=}5$
        (before $w(R_A){=}6$) and $\Reg[\mathrm{proc}(C)]$
        at $t{=}6$ (after $w(R_C){=}5$).
\end{itemize}

This is the key crossing: $B$ sees $A$'s response but misses
$C$'s, while $D$ sees $C$'s response but misses $A$'s.

\paragraph{Resulting views.}
\[
\begin{aligned}
v_A &= \{I_A, I_C\}, \\
v_B &= \{I_A, R_A, I_B, I_C, I_D\}, \\
v_C &= \{I_A, I_C\}, \\
v_D &= \{I_A, I_C, R_C, I_D, I_B\}.
\end{aligned}
\]

\paragraph{Evaluating the predicate.}
We check $\precedes(X,Y)$
$= [(R,X) \in v_Y] \land [(I,Y) \notin v_X] \land [v_X \subseteq v_Y]$
for each relevant pair:

\emph{$\precedes(A,B) = \mathtt{true}$:}
$(R,A) \in v_B$ \checkmark;\;
$(I,B) \notin v_A$ \checkmark;\;
$v_A = \{I_A,I_C\} \subseteq v_B$ \checkmark.

\emph{$\precedes(C,D) = \mathtt{true}$:}
$(R,C) \in v_D$ \checkmark;\;
$(I,D) \notin v_C$ \checkmark;\;
$v_C = \{I_A,I_C\} \subseteq v_D$ \checkmark.

\emph{$\precedes(C,B) = \mathtt{false}$:}
$(R,C) \notin v_B$ since $B$ read $\Reg[\mathrm{proc}(C)]$
at $t{=}4$ before $w(R_C){=}5$.
Conjunct~(C1) fails.

\emph{$\precedes(A,D) = \mathtt{false}$:}
$(R,A) \notin v_D$ since $D$ read $\Reg[\mathrm{proc}(A)]$
at $t{=}5$ before $w(R_A){=}6$.
Conjunct~(C1) fails.

\paragraph{No sequential history exists.}
In any sequential history $S$ consistent with these verdicts, the
sequential specification requires:
\[
\precedes(A,B) = \mathtt{true}
\;\Rightarrow\; R_A \prec_S I_B,
\qquad
\precedes(C,D) = \mathtt{true}
\;\Rightarrow\; R_C \prec_S I_D.
\]
The false verdicts require:
\[
\precedes(C,B) = \mathtt{false}
\;\Rightarrow\; I_B \prec_S R_C,
\qquad
\precedes(A,D) = \mathtt{false}
\;\Rightarrow\; I_D \prec_S R_A.
\]
Chaining these four constraints:
\[
R_A \prec_S I_B \prec_S R_C \prec_S I_D \prec_S R_A,
\]
a cycle. No sequential history satisfies all four verdicts
simultaneously, so the Collect Monitor is not sequentially
consistent and a fortiori not linearizable. \qedhere

\section{Proof of Theorem~\ref{thm:collect-qc}}
\label{app:collect-qc}

\begin{proof}
Let $Q$ be a quiescent instant with
$\res(\mathit{op}_A) \prec Q \prec \inv(\mathit{op}_B)$.
Under external placement, $A$'s $\addres$ call is invoked after
$\mathit{op}_A$ returns and completes before $Q$, and $B$'s
$\addinv$ call is invoked after $Q$ and before $\mathit{op}_B$
starts.
Since $Q$ is quiescent, no monitor step of any operation is in
progress at $Q$: every write associated with $A$'s monitor calls
is complete before $Q$, and every write associated with $B$'s
monitor calls begins after $Q$.

We verify the three conjuncts of $\precedes(A,B)$:

\emph{(C1) $(R,A) \in v_B$.}
The write $w(R,A)$ occurs during $A$'s $\addres$ call, which
completes before $Q$.
Hence $w(R,A) \prec Q$.
$B$'s collect runs entirely after $Q$, so it reads every register
after $Q$ and therefore observes every event written before $Q$,
including $(R,A)$.
Thus $(R,A) \in v_B$.

\emph{(C2) $(I,B) \notin v_A$.}
The write $w(I,B)$ occurs during $B$'s $\addinv$ call, which
begins after $Q$.
Hence $w(I,B) \succ Q$.
$A$'s collect runs entirely before $Q$, so it cannot have
observed any event written after $Q$.
Thus $(I,B) \notin v_A$.

\emph{(C3) $v_A \subseteq v_B$.}
Every event in $v_A$ was written before $A$'s collect completed,
hence before $Q$.
$B$'s collect reads every register after $Q$ and therefore
observes all events written before $Q$.
Hence every element of $v_A$ is also in $v_B$,
so $v_A \subseteq v_B$.

All three conjuncts hold, so $\precedes(A,B) = \mathtt{true}$.

For operations not separated by a quiescent instant, the monitor
may return either verdict: quiescent consistency imposes no
constraint on the ordering of operations within the same quiescent
group, and both \texttt{true} and \texttt{false} are compatible
with a legal sequential history for that group.
\end{proof}

\end{document}